\documentclass[pdflatex,sn-mathphys-num]{sn-jnl}

\usepackage{graphicx}%
\usepackage{multirow}%
\usepackage{array}
\usepackage{amsmath,amssymb,amsfonts}%
\usepackage{amsthm}%
\usepackage{hyperref}
\usepackage{mathrsfs}%
\usepackage[title]{appendix}
\usepackage{xcolor}%
\usepackage{textcomp}%
\usepackage{manyfoot}%
\usepackage{booktabs}%
\usepackage{algorithm}%
\usepackage{algorithmicx}%
\usepackage{algpseudocode}%
\usepackage{listings}%
\usepackage{anyfontsize}

\theoremstyle{thmstyleone}%
\newtheorem{theorem}{Theorem}
\theoremstyle{thmstyletwo}%

\theoremstyle{thmstylethree}%

\begin{document}

\title{On Modeling Cylindrical Data with a Discrete Circular Component and Its Environmental Applications}

\author[1]{\fnm{Brajesh Kumar} \sur{Dhakad}}\email{brajeshdhakad85047@gmail.com}

\author*[2]{\fnm{Jayant} \sur{Jha}}\email{jayantjha@gmail.com}
\affil[1]{\orgdiv{Department of Mathematics}, \orgname{Indian Institute of Technology Madras}, \orgaddress{\street{IIT P.O., Sardar Patel Road}, \city{Chennai}, \postcode{600036}, \state{Tamil Nadu}, \country{India}}}

\affil*[2]{\orgdiv{Statistical Science Division}, \orgname{Indian Statistical Institute, Kolkata}, \orgaddress{\street{203 B. T. Road}, \city{Kolkata}, \postcode{700108}, \state{West Bengal}, \country{India}}}


 \abstract{Standard statistical methods are often inadequate for modeling the joint dependence between linear and circular variables, and existing methods for modeling this dependence are designed only for continuous variables. However, circular data are frequently observed on a finite set of equally spaced directions, either due to rounding prior to reporting or because of the experimental design employed for data collection. To address this gap, we propose a flexible, analytically tractable model for jointly representing a discrete circular and a continuous linear variable. The construction combines a wrapped symmetric geometric distribution, a Weibull distribution, and a trigonometric linking function. This formulation yields closed-form expressions for the joint, marginal, and conditional distributions. The choice of the Weibull distribution facilitates direct sample generation using the inverse transform technique. Additionally, it provides explicit expressions for conditional moments, enabling a flexible circular--linear regression framework. We detail the theoretical interpretation of the model parameters, mathematically establishing the monotonicity of the conditional mean and variance with respect to the dependence parameters. The performance of the estimators is demonstrated through extensive simulations, and the utility of the model is illustrated by analyzing two empirical environmental datasets.}

\keywords{Bivariate distribution, Cylindrical data, Circular–linear correlation, Circular–linear regression, Marginal and conditional distributions}

\maketitle
 
\section{Introduction}\label{sec1}

Cylindrical data consist of a circular variable and a linear variable, representing phenomena in which both angular and linear components are observed jointly. Such data arise in a wide range of real-world contexts across diverse fields such as environmental science, meteorology, oceanography, biology, physics, image analysis, and astronomy. For example, in the study of atmospheric pollution, it is often of interest to determine whether the concentration of a pollutant depends on the wind direction. Specifically, an example involving ozone concentration and wind direction is presented in \cite{mardia2009} (Example 11.1, p. 246), where wind direction records are discrete. Applications in other fields are also discussed in the introduction of the same text. For an extensive review of these fields, one may also refer to \cite{Mardia1975, Fisher1993, Jammalamadaka2001, SenGupta2022}.

Analyzing the association in cylindrical data is non-trivial because linear and circular variables exhibit fundamentally different characteristics. Linear variables take values along the real line with a natural ordering and infinite range, whereas circular variables are defined on the unit circle, exhibiting periodic behavior with no fixed origin. Consequently, standard bivariate distributions defined on Euclidean space are unsuitable for describing the joint variability of such data. This necessitates specialized statistical models that explicitly account for the directional nature of the circular component.

In addition to these topological differences, a practical challenge often arises from the resolution of the data. Circular data are frequently observed on a finite set of equally spaced directions; specifically, this involves observations that take values in the set of $m$ equally spaced categories around the unit circle numbered $0, 1, \dots, m-1$, where the $i$-th direction corresponds to the angle $2 \pi i/m$. This discreteness may arise from rounding prior to reporting or result directly from the experimental design employed in gathering the data. For instance, meteorological stations, wave buoys, and animal-orientation sensors frequently report directions in a finite set of equally spaced sectors of the unit circle instead of exact angles. Since the discrete circular variable is often paired with a continuous linear variable in practice (e.g., wind direction and speed), this necessitates statistical models explicitly designed for discrete circular and continuous linear variables.

Existing models for cylindrical data are defined exclusively for continuous variables. For instance, \cite{mardia1978} proposed a model for cylindrical variables, which was further extended by \cite{kato2008}. \cite{johnson1978} discussed angular--linear distributions and related regression models. \cite{garcia2013} examined an algorithm for the flexible estimation of the joint density of circular--linear variables, and a flexible model for cylindrical data is discussed in \cite{abe2017}. However, adapting these continuous frameworks for discrete circular variables requires integrating their densities over the fixed intervals corresponding to the circular variables. This process rarely yields a closed-form solution, necessitating the use of numerical methods that introduce both computational burden and approximation error. Therefore, a tractable model for discrete circular and continuous linear variables is needed. 

This article presents a tractable joint model for discrete circular and continuous linear variables. This model is constructed by combining the wrapped symmetric geometric (WSG) distribution, examined in \cite{Jha2018}, and the Weibull distribution with a trigonometric linking function. A distinct advantage of this formulation is that, unlike other standard positive-valued distributions such as the gamma, the Weibull distribution admits a closed-form inverse cumulative distribution function, which simplifies random number generation. Closed-form expressions are derived for the joint density, marginals, conditionals, and circular--linear regression. Furthermore, we provide a detailed interpretation of the model parameters, specifically establishing the monotonicity of the conditional mean and variance with respect to $\gamma, \beta$, and $d$. The proposed method is applied to analyze a wind dataset, which consists of grouped observations of wind direction and speed, and an SO$_2$ Concentration-Acrophase dataset, based on the diurnal behavior of Sulfur Dioxide (SO$_2$) concentrations. 

We compare the proposed model against discretized versions of well-known continuous formulations. These include the JW model proposed by \cite[Eq.~2.1]{johnson1978}, alongside the models introduced by \cite{abe2017}: the Weibull sine-skewed von Mises (WeiSSVM) and the Generalized Gamma sine-skewed von Mises (GGSSVM).
 
The rest of the paper is organized as follows: Section \ref{sec2} introduces the proposed distribution, and Section \ref{sec3} discusses its properties. Further, Sections \ref{sec4} and \ref{sec5} discuss the circular--linear regression and the correlation framework, respectively. Next, Section \ref{sec6} studies the proposed models using simulated data, and Section 
\ref{sec7} presents the applications of the proposed methodology to real datasets. Section \ref{sec8} concludes. 


\section{Bivariate Weibull-Based Wrapped Geometric Model (BWBWG)}\label{sec2}

Let $\mathbb{Z}_{+}$ and $\mathbb{R}^{+}$ denote the sets of positive integers and positive real numbers, respectively, and let $\mathbb{Z}_\mathrm{m}$ be the cyclic group of integers of order $\mathrm{m \in \mathbb{Z}_{+}}$. Define
\begin{equation*}
   \mathcal{D}_{\mathrm{m}}
   = \left\{ \frac{2\pi i}{\mathrm{m}} : i \in \mathbf{Z}_\mathrm{m} \right\},
\end{equation*}
which forms a regular lattice on the unit circle.

The proposed joint mass--density for the random variables $(\Theta, X)$, defined on  
$\mathcal{D}_{\mathrm{m}} \times \mathbb{R}^{+}$, is given by:
\begin{align}\label{eq:1.1}
\scalebox{0.85}{$
 f(\theta,x) \propto \Big(q^\zeta + q^{\mathrm{m}-\zeta}\Big)x^{\gamma-1} \exp\bigg(-(\beta x)^{\gamma}\Big(1 - \tanh(d)\cos\big(\mu - \frac{ 2 \pi \zeta}{m}\big)\Big)^\varepsilon\bigg),
 $}
\end{align}
where, $\theta \in \mathcal{D}_{\mathrm{m}}, ~\zeta = (k - \alpha) \bmod \mathrm{m}, ~\alpha \in \mathbf{Z}_\mathrm{m},~ x \in \mathbb{R}^{+},~ q \in (0, 1), ~ \mu \in [0, 2\pi),~ \beta>0,~ \gamma > 0,~ d \geq 0$ and  $\varepsilon \in \{-1,1\}$. In this construction, we restrict $d \ge 0$, since any negative value of $d$ is equivalent to a positive one under a shift of the parameter $\mu$ by $\pi$. 
Specifically,
$ 
\tanh(d)\cos\!\left(\mu - \tfrac{2\pi \zeta}{\mathrm{m}}\right) 
= \tanh(-d)\cos\!\left(\mu - \pi - \tfrac{2\pi \zeta}{\mathrm{m}}\right).
$ 

Closed-form expression for the proposed joint density in \eqref{eq:1.1} is given by
\begin{align}
\label{eq:2.5}
\resizebox{0.85\textwidth}{!}{$
f(\theta,x) = C_2 \Big(q^\zeta + q^{\mathrm{m}-\zeta}\Big)x^{\gamma-1} \exp\bigg(-(\beta x)^{\gamma}\Big(1 - \tanh(d)\cos\big(\mu - \frac{ 2 \pi \zeta}{\mathrm{m}}\big)\Big)^\varepsilon\bigg) 
$}
\end{align}
where, {\small $ \frac{1}{C_2} = \frac{p_1 + q^\mathrm{m} p_2}{\gamma \beta^{\gamma}},~ p_1 = \sum_{k = 0}^{\mathrm{m}-1} q^{\zeta} \Big /\Big( 1-tanh(d)cos\big(\mu-\frac{2 \pi \zeta}{\mathrm{m}}\big)\Big)^\varepsilon,~ p_2 = \\ \sum_{k = 0}^{m-1} 1\Big/q^{\zeta} \Big(1-tanh(d)cos\big(\mu-\frac{2 \pi \zeta}{\mathrm{m}}\big)\Big)^\varepsilon$ }.

The distribution given in \eqref{eq:2.5} is referred to as Bivariate Weibull-Based Wrapped Geometric (BWBWG) distribution.   


\section{Properties of the Model}\label{sec3}
This section examines key properties of the model, including the influence of parameters on the dependence structure of $(\Theta, X)$, and the behavior of the marginal and conditional distributions at the boundary values of $q$ and $d$. 

\begin{theorem}\label{theo:2.5}
For the BWBWG, the random variables $\Theta$ and $ X$ are independent if and only if $d = 0$. 
\end{theorem}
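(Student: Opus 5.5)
The plan is to prove the two directions separately, using the factorization criterion for independence of a discrete and a continuous component. $\Theta$ and $X$ are independent if and only if the joint mass--density $f(\theta,x)$ in \eqref{eq:2.5} factors as $g(\theta)h(x)$ for all $\theta\in\mathcal{D}_{\mathrm m}$ and almost every $x>0$. To shorten notation, write
\[
A(\zeta)=\Big(1-\tanh(d)\cos\big(\mu-\tfrac{2\pi\zeta}{\mathrm m}\big)\Big)^{\varepsilon}.
\]
This quantity is strictly positive because $\tanh(d)<1$.

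\emph{Sufficiency.} If $d=0$ then $\tanh(d)=0$, so $A(\zeta)\equiv 1$. The density then reduces to
\[
C_2\,(q^\zeta+q^{\mathrm m-\zeta})\,x^{\gamma-1}e^{-(\beta x)^\gamma},
\]
which is the product of a WSG mass function in $\theta$ and a Weibull density in $x$. Hence $\Theta$ and $X$ are independent.

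\emph{Necessity.} Suppose $\Theta$ and $X$ are independent. Then the conditional density of $X$ given $\Theta=\theta$ does not depend on $\theta$. From \eqref{eq:2.5}, the conditional law of $X$ given $\Theta$ is Weibull with shape $\gamma$ and scale parameter determined by $\beta^\gamma A(\zeta)$:
\[
f(x\mid\theta)=\gamma\beta^\gamma A(\zeta)\,x^{\gamma-1}\exp\big(-(\beta x)^\gamma A(\zeta)\big).
\]
Two Weibull densities with the same shape coincide only if their rate parameters coincide. One can see this by comparing the log-densities at two values of $x$, or by comparing the means $\Gamma(1+1/\gamma)/(\beta A(\zeta)^{1/\gamma})$. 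So $A(\zeta)$ must be constant in $\zeta\in\{0,\dots,\mathrm m-1\}$. Equivalently, $\tanh(d)\cos(\mu-2\pi\zeta/\mathrm m)$ is constant over the lattice.

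The main obstacle is deducing $d=0$ from this constancy. If $d>0$, we need $\cos(\mu-2\pi j/\mathrm m)$ to be constant over $j\in\mathbb Z_{\mathrm m}$. For $\mathrm m\ge 3$ this is impossible: cosine takes a given value at most twice per period, while the $\mathrm m$ lattice angles are distinct modulo $2\pi$. For $\mathrm m=2$, the condition gives $\cos\mu=-\cos\mu$, i.e.\ $\cos\mu=0$. In that case $d$ is \emph{not} identifiable from dependence, and independence holds with $d>0$. For $\mathrm m=1$ everything is trivially independent.

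I would therefore state the necessity argument under the implicit assumption $\mathrm m\ge 3$, or add the proviso that $\cos(\mu-2\pi j/\mathrm m)$ is nonconstant in $j$. I would note the degenerate cases $\mathrm m\le 2$ in a remark, since the literal "only if" fails there. Under that assumption, constancy forces $\tanh(d)=0$, i.e.\ $d=0$.
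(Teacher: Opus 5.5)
Your proof is correct and follows the paper's route. For sufficiency you factor \eqref{eq:2.5} when $\tanh(d)=0$, and for necessity you use that the conditional Weibull density \eqref{eq:3.11} must not depend on $\theta$. Your extra step is a genuine correction: the paper's conclusion that ``the coefficient $\tanh(d)$ of $\cos(\mu-2\pi\zeta/\mathrm{m})$ must be 0'' tacitly assumes this cosine is nonconstant over the lattice, which holds for $\mathrm{m}\ge 3$ but fails for $\mathrm{m}=1$ and for $\mathrm{m}=2$ with $\cos\mu=0$, so you are right that the ``only if'' direction needs that proviso.
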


This theorem establishes that $d=0$ is a necessary and sufficient condition for the independence of $\Theta$ and $X$.

\subsection{Marginal Distribution of \texorpdfstring{$\Theta$}{Theta} }
The marginal distribution of the circular variable $\Theta$  is given by
\begin{align}\label{eq:3.1}
P(\theta) = \frac{q^{\zeta} + q^{\mathrm{m} - \zeta}}{(p_1 + q^\mathrm{m} p_2)\left(1 - \tanh(d)\cos\left(\mu - \frac{2\pi \zeta}{\mathrm{m}}\right)\right)^\varepsilon},
\end{align}
and is obtained from the BWBWG distribution given in \eqref{eq:2.5}. Since the expression is undefined for $q \in \{0,1\}$, the distributions for these boundary cases are derived by evaluating the corresponding limits.
\begin{itemize}
    \item When $q = 0$, the marginal distribution degenerates to a point mass at $\theta = \tfrac{2\pi \alpha}{\mathrm{m}}$.
    \item When $q = 1$, the marginal distribution reduces to the simplified form:
    \begin{align}\label{eq:3.2}
 P(\theta) = \frac{1}{p_3 \big(1 - \tanh(d) \cos\big(\mu-\frac{2 \pi \zeta}{\mathrm{m}}\big)\big)^\varepsilon},   
\end{align}
where $ p_3 = \sum_{k = 0}^{\mathrm{m}-1}\frac{1}{ \big(1 - \tanh(d) \cos \big(\mu-\frac{2 \pi \zeta}{\mathrm{m}}\big)\big)^\varepsilon}$, and the resulting distribution is asymmetric. Moreover, when $d = 0$, \eqref{eq:3.2} reduces to the uniform distribution, i.e. $ P(\theta) = \frac{1}{\mathrm{m}},~ \forall \theta  \in \left\{\frac{2 \pi k}{\mathrm{m}} : k\in \mathbf{Z}_\mathrm{m} \right\} $.
\end{itemize}

\subsection{Marginal Distribution of  \texorpdfstring{$X$}{X} }
The marginal distribution of the linear variable $X$ is given by 
\begin{equation}\label{eq:3.3}
 f_{X}(x) =  C_2 \sum_{k = 0}^{\mathrm{m}-1} \big(q^\zeta + q^{\mathrm{m}-\zeta}\big)x^{\gamma-1} \exp\Big(-(\beta x)^{\gamma}\Big(1 - \tanh(d)\cos\Big(\mu - \frac{ 2 \pi \zeta}{\mathrm{m}}\Big)\Big)^\varepsilon\Big)     
\end{equation}

\subsection{Conditional Distribution of  \texorpdfstring{$\Theta$}{Theta} given \texorpdfstring{$X$}{X}}
The conditional distribution of $\Theta$ given $X = x$ is
\begin{align} \label{eq:3.7}
 P(\theta \mid x) = \frac{1}{p_4}(q^\zeta + q^{\mathrm{m}-\zeta}) \exp\Big(-(\beta x)^{\gamma}\Big(1 - \tanh(d) \cos\Big(\mu-\frac{2 \pi \zeta}{\mathrm{m}}\Big)\Big)^\varepsilon\Big),
\end{align}
 where $p_4 = \sum_{k = 0}^{\mathrm{m}-1} (q^\zeta + q^{\mathrm{m}-\zeta}) \exp\big(-(\beta x)^{\gamma}\big(1- \tanh(d)\cos(\mu-\frac{2 \pi \zeta}{\mathrm{m}})\big)^\varepsilon\big)$.\\ 

 \autoref{fig:3.5} displays the probability mass function (pmf) given in \eqref{eq:3.7}. It shows how $\alpha,~\varepsilon,~d$ and $X$ affect the pmf under the settings $\mathrm{m} = 10$, $\gamma = \beta = q = 0.5$, and $\mu = \tfrac{\pi}{4}$. The subfigures in the first row correspond to $\varepsilon = -1$, while those in the second and third rows to $\varepsilon = 1$. The subfigures in the first, second, and third columns correspond to $x = 25$, $x = 50$, and $x = 100$, respectively.

\begin{figure}[!ht]
    \centering
    \includegraphics[width= 1\linewidth]{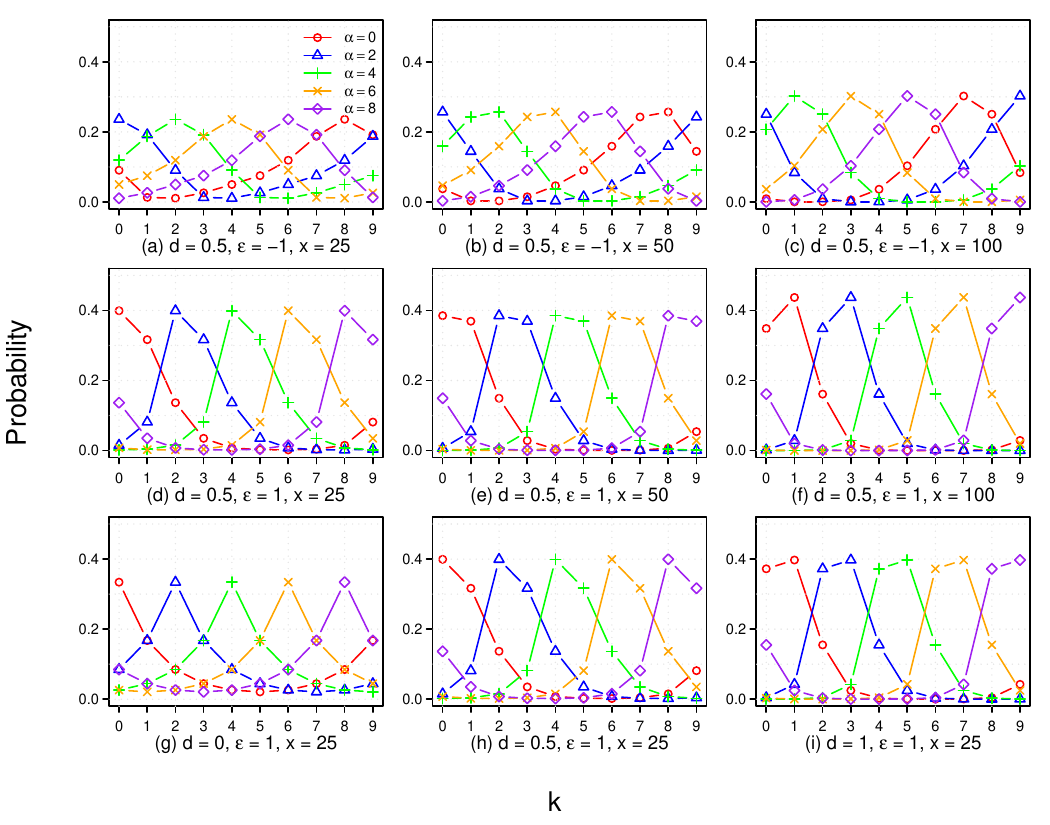}
   \caption{Plots of Conditional pmf of $\Theta$ given $X$}
    \label{fig:3.5}
\end{figure}
 
 In \autoref{fig:3.5}, the integers $k$ on the horizontal axis represent the discrete circular categories, where each $k$ corresponds to the angle $2 \pi k/\mathrm{m}$. In subfigures (a)--(c), the location of the mode shifts from $2\pi(\alpha+8)/\mathrm{m}$ to $2\pi(\alpha+7)/\mathrm{m}$ as $X$ increases from $25$ to $100$, where $\alpha+7$ and $\alpha+8$ are the neighboring categories. Similarly, in subfigures (d)--(f), this location moves from $2\pi\alpha/\mathrm{m}$ to $2\pi(\alpha+1)/\mathrm{m}$ over the same range of $X$. Subfigure (g) represents the marginal pmf of $\Theta$, where the variables are independent because $d = 0$. Here, the location of the mode is $2 \pi \alpha / \mathrm{m}$ (see subfigure (g)), but as $d$ increases, it shifts from $2\pi \alpha/\mathrm{m}$ to $2\pi(\alpha+1)/\mathrm{m}$. These variations show that the modal direction is influenced by $X$, $\varepsilon$, $d$, and $\alpha$, illustrating how the model captures circular--linear dependence through systematic shifts in the modal location. Note that the modal shift is not restricted to neighboring categories. For instance, under the same parameter settings as in subfigures (a)--(c), the mode shifts from $2\pi(\alpha+8)/\mathrm{m}$ to $2\pi(\alpha+6)/\mathrm{m}$ as X increases significantly (e.g., around $1200$).

In \autoref{fig:3.5}, the conditional distribution of $\Theta$ given $X = x$ appears to be unimodal. However, the model is flexible, and bimodality can arise under certain parameter settings. Specifically, for $\mu = \pi$, the distribution is symmetric around $\alpha$. Furthermore, if $\mu = \pi$, $\varepsilon = 1$, and $\mathrm{m} \ge 4$ is an even integer, there exists a range of values for $x$ where the distribution has two modes: one at $\alpha$ and another at $(\alpha + \mathrm{m}/2) \bmod \mathrm{m}$. Proofs of these properties are provided in the Appendix.

 The distribution in \eqref{eq:3.7}, similar to the marginal distribution in \eqref{eq:3.1}, exhibits the same boundary behavior at $q=0$, where it degenerates to a point mass at $\theta = \tfrac{2\pi\alpha}{\mathrm{m}}$.  
For $q=1$, it simplifies to 

\begin{align}\label{eq:3.8}
 P(\theta \mid x) = \frac{1}{p_5} \exp\Big(-(\beta x)^{\gamma}\Big(1 - \tanh(d) \cos\Big(\mu-\frac{2 \pi \zeta}{\mathrm{m}} \Big)\Big)^\varepsilon\Big), 
\end{align}
where $p_5 = \sum_{k = 0}^{\mathrm{m}-1}\exp \big(-(\beta x)^{\gamma}\big(1 - \tanh(d) \cos (\mu-\frac{2 \pi \zeta}{\mathrm{m}} )\big)^\varepsilon \big)$. Furthermore, when $d=0$, \eqref{eq:3.8} reduces to the uniform distribution.

\subsection{Conditional Distribution of X given \texorpdfstring{$\Theta$}{Theta}}
The conditional distribution of $X$ given $\Theta = \theta$ is 
\begin{align}\label{eq:3.11}
 f_{X|\theta}(x \mid \theta) =  &\gamma \beta^{\gamma} \Big(1 - \tanh(d) \cos\Big(\mu-\frac{2\pi \zeta}{\mathrm{m}}\Big)\Big)^\varepsilon x^{\gamma-1}\exp\Big(-(\beta x)^ \gamma \Big(1  \nonumber \\ &  - \tanh(d) \cos\Big(\mu-\frac{2\pi \zeta}{\mathrm{m}}\Big)\Big)^\varepsilon \Big). 
 \end{align}

\autoref{fig:3.9} presents the density of the conditional distribution given in \eqref{eq:3.11} for various values of $\beta,~\gamma$, and $d$ under the settings $\mathrm{m} = 10$, $q = 0.5$, $\mu = \frac{\pi}{2}$, $\alpha = 0$, and $\varepsilon = 1$, given that $\theta = 6 \pi/10$. In the first column, $\beta$ is fixed at $0.01$, while in the second column, $\gamma$ is fixed at $4$.

\begin{figure}[!ht]
 \centering
 \includegraphics[width= 1\linewidth]{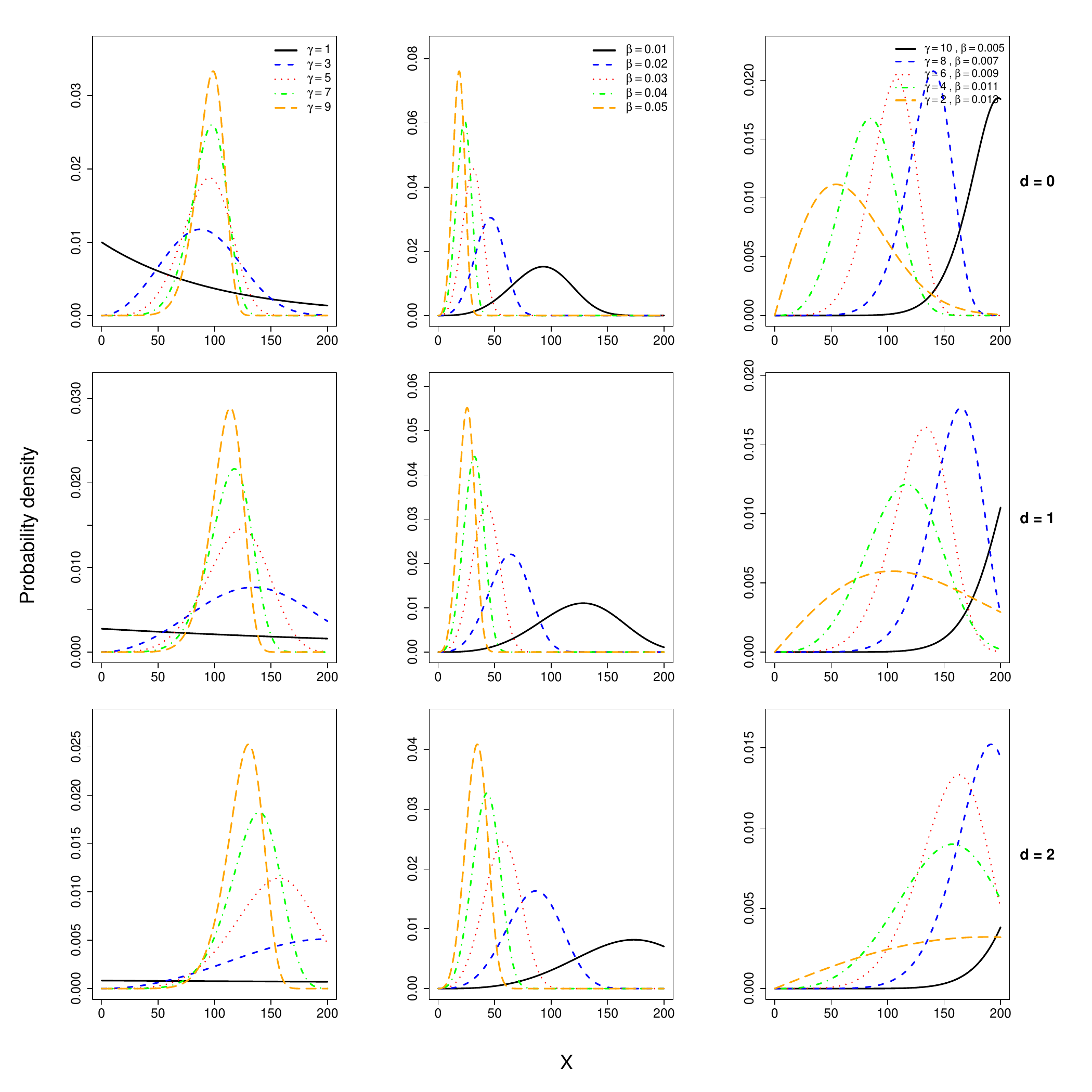}
 \caption{Density plots of Conditional distribution of $X$ given $\Theta$ }
 \label{fig:3.9}
\end{figure}

In \autoref{fig:3.9}, the subfigures in the first column show that larger values of $\gamma$ result in a highly concentrated distribution, whereas smaller values lead to a more dispersed form. The subfigures corresponding to the second column indicate that increasing $\beta$ shifts the distribution towards smaller values of $X$ and increases the density at the mode. The subfigures in the third column illustrate the joint influence of $\gamma$ and $\beta$. Overall, $\gamma$ governs the concentration of the conditional density, whereas $\beta$ influences both its location and scale. The parameter $d$ controls the interaction strength between the angular and linear components. The first row corresponds to $d = 0$, the second to $d = 1$, and the third to $d = 2$. \autoref{fig:3.9} shows that increasing $d$ results in a lower peak and a more spread-out distribution. We derive the interpretation of these parameters in the next section \ref{sec4}. 

Note that the conditional distribution given in \eqref{eq:3.11} depends on the parameters $\mu$ and $\alpha$ only through ($\mu - \frac{2\pi\zeta}{\mathrm{m}}$), where $\zeta = (k - \alpha) \bmod \mathrm{m}$. Consequently, the parameters are not individually identifiable. In the regression context, to ensure identifiability, we fix $\alpha = 0$ without loss of generality.

\section{Circular--Linear Regression}\label{sec4}
In this section, we formulate a regression model to analyze the dependence of the linear variable $X$ on the circular predictor $\Theta$. Since the conditional distribution of $X$ given $\Theta = \theta$ has a tractable closed form, it leads to a natural circular--linear regression. 
The conditional mean $\operatorname{E}(X \mid \Theta=\theta)$ and variance $\operatorname{V}(X \mid \Theta = \theta)$ are given as

\begin{align}\label{eq:3.12}
\operatorname{E}(X \mid \Theta = \theta) = \frac{\Gamma(1 + \frac{1}{\gamma})}{\beta \big(1 - \tanh(d) \cos(\mu-\frac{2 \pi \zeta}{\mathrm{m}})\big)^\frac{\varepsilon}{\gamma}},
 \end{align}

and 
\begin{align}\label{eq:3.13}
\operatorname{V}(X \mid \Theta = \theta) = \frac{\Gamma(1 + \frac{2}{\gamma}) - \Gamma(1 + \frac{1}{\gamma})^2}{\beta^2 \big(1 - \tanh(d) \cos(\mu - \frac{2 \pi \zeta}{\mathrm{m}})\big)^\frac{2 \varepsilon}{\gamma}}.
\end{align}

\subsection{Interpretation of Parameters}\label{sec4.1}
The parameters of the proposed model possess distinct interpretable properties. The parameter $\beta$ acts as a scaling factor, determining the baseline magnitude of the linear variable $X$. As seen in equations \eqref{eq:3.12} and \eqref{eq:3.13}, both the conditional mean and variance are strictly decreasing functions of $\beta$.  

With fixed $\alpha = 0$, the parameter $\mu$ provides a continuous adjustment to determine the angular position of the peak.

The shape parameter $\gamma$ governs the smoothness of the regression curve. We establish the following theoretical monotonicity property.

\begin{theorem}\label{theo:4.1}
If $\mu - \tfrac{2\pi\zeta}{\mathrm{m}}$ lies in the first or fourth quadrant with $\varepsilon = 1$,  
or in the second or third quadrant with $\varepsilon = -1$,  
then the conditional mean and variance of $X$ given $\Theta = \theta$ are decreasing functions of $\gamma$.
\end{theorem}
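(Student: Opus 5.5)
Write $\phi=\mu-\tfrac{2\pi\zeta}{\mathrm{m}}$ and $c=\bigl(1-\tanh(d)\cos\phi\bigr)^{\varepsilon}$. The first step is to use the quadrant hypothesis to show $0<c\le 1$ in both cases. If $\varepsilon=1$ and $\phi$ lies in the first or fourth quadrant, then $\cos\phi\ge 0$, so $1-\tanh(d)\cos\phi\in(0,1]$. If $\varepsilon=-1$ and $\phi$ lies in the second or third quadrant, then $\cos\phi\le 0$, so $1-\tanh(d)\cos\phi\ge 1$ and its reciprocal lies in $(0,1]$. Hence $\log c\le 0$.

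By \eqref{eq:3.12} and \eqref{eq:3.13}, with $t=1/\gamma$,
\begin{align*}
\operatorname{E}(X\mid\Theta=\theta)=\frac{\Gamma(1+t)\,c^{-t}}{\beta},\qquad
\operatorname{V}(X\mid\Theta=\theta)=\frac{g(t)\,c^{-2t}}{\beta^{2}},\qquad g(t)=\Gamma(1+2t)-\Gamma(1+t)^{2}.
\end{align*}
Since $t=1/\gamma$ is decreasing in $\gamma$, it suffices to show that both quantities are increasing in $t$. The factors $c^{-t}$ and $c^{-2t}$ are nondecreasing in $t$ because $c\le 1$, so it remains to handle $\Gamma(1+t)$ and $g(t)$.

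For the variance, I would show that $g$ is increasing on $(0,\infty)$. Note that $g(t)=\operatorname{Var}(Y^{t})$ for $Y\sim\mathrm{Exp}(1)$, so
\begin{align*}
g'(t)=2\operatorname{Cov}\bigl(Y^{t},\,Y^{t}\log Y\bigr),
\end{align*}
obtained by differentiating under the expectation. I would try to make this covariance positive with a Chebyshev (FKG-type) association argument, using that $Y^{t}$ is increasing in $Y$. The complication is that $Y^{t}\log Y$ is not monotone on $(0,1)$. The first fix I would try is to rewrite the covariance as
\begin{align*}
\tfrac12\,\mathrm{E}\Bigl[(Y_1^{t}-Y_2^{t})\bigl(Y_1^{t}\log Y_1-Y_2^{t}\log Y_2\bigr)\Bigr]
\end{align*}
with $Y_1,Y_2$ i.i.d. copies of $Y$, and bound the integrand from below. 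Failing that, I would fall back to the digamma form
\begin{align*}
g'(t)=2\bigl[\Gamma(1+2t)\psi(1+2t)-\Gamma(1+t)^{2}\psi(1+t)\bigr],
\end{align*}
and combine log-convexity, $\Gamma(1+2t)\ge\Gamma(1+t)^2$, with the monotonicity of $\psi$, splitting into cases according to the sign of $\psi(1+t)$.

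The mean is where I expect the main obstacle, because $\Gamma(1+t)$ is not monotone. Differentiating gives
\begin{align*}
\frac{\partial}{\partial\gamma}\log\operatorname{E}(X\mid\Theta=\theta)=\frac{1}{\gamma^{2}}\Bigl(\log c-\psi\bigl(1+\tfrac1\gamma\bigr)\Bigr).
\end{align*}
This is negative whenever $\psi(1+1/\gamma)\ge 0$, that is for $\gamma\lesssim 2.166$, since $\log c\le 0$. It is also negative whenever $\log c\le-\gamma_{E}$, where $\gamma_E$ is the Euler--Mascheroni constant, because $\psi>-\gamma_{E}$ on $(1,\infty)$.

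Outside these two regimes the claim appears to fail. For instance, at $d=0$ we have $c=1$, and the mean is $\Gamma(1+1/\gamma)/\beta$, which increases for large $\gamma$. So I would prove the mean statement under the extra condition $\log c<\psi(1+1/\gamma)$, which covers both regimes above, and record the $d=0$ counterexample to show the restriction is needed.
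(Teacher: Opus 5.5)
Your analysis of the conditional mean is correct, and on this point your proposal is more accurate than the paper. The paper's proof asserts that in both \eqref{eq:3.12} and \eqref{eq:3.13} ``the numerators decrease'' in $\gamma$. But $\Gamma(1+1/\gamma)$ decreases in $\gamma$ only for $\gamma<1/(x_0-1)\approx2.166$, where $x_0\approx1.4616$ is the minimizer of $\Gamma$ on $(0,\infty)$. So the mean half of the theorem is false as stated, and the paper's argument breaks exactly at that step.

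Your $d=0$ counterexample meets every hypothesis, since the quadrant condition constrains only $\mu-2\pi\zeta/\mathrm{m}$. For instance, $\Gamma(4/3)\approx0.893<\Gamma(1.1)\approx0.951$. The failure is not confined to the independence case. By your derivative formula, the mean eventually increases whenever $c>e^{-\gamma_E}\approx0.561$, for example with $\varepsilon=1$, $d=0.5$ and $\mu-2\pi\zeta/\mathrm{m}=\pi/3$. Since $\psi(1+1/\gamma)$ decreases from $+\infty$ to $-\gamma_E$, your extra condition has a clean global form. The mean is decreasing on all of $(0,\infty)$ if and only if $c\le e^{-\gamma_E}$. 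Otherwise it decreases up to the unique $\gamma^*$ with $\psi(1+1/\gamma^*)=\log c$ and increases afterwards.

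The genuine gap in your proposal is the variance half, which is true but which you leave unproved. The Chebyshev route fails pointwise, as you note. The digamma fallback is exactly the paper's argument and has the same hole. When $\psi(1+2t)<0$, i.e.\ $t<(x_0-1)/2\approx0.231$ ($\gamma$ above about $4.33$), multiplying $\psi(1+2t)$ by $\Gamma(1+2t)/\Gamma(1+t)^2\ge1$ makes it smaller. So log-convexity works against you there. Splitting on the sign of $\psi(1+t)$ does not settle the case where both digamma values are negative.

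One way to close the gap is to write $g(t)=\Gamma(1+t)^2\bigl(e^{D(t)}-1\bigr)$ with $D(t)=\log\Gamma(1+2t)-2\log\Gamma(1+t)$, so that
\[
(\log g)'(t)=2\psi(1+t)+\frac{e^{D(t)}}{e^{D(t)}-1}\,D'(t),\qquad D'(t)=2\bigl(\psi(1+2t)-\psi(1+t)\bigr)>0 .
\]
If $\psi(1+t)\ge0$, this is positive. Otherwise $t<x_0-1<1/\sqrt2$. Comparing $\psi'(x)=\sum_{n\ge0}(x+n)^{-2}$ term by term gives $2\psi'(1+2s)>\psi'(1+s)$ for $0\le s<1/\sqrt2$. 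Hence $D'$ is increasing on $[0,t]$, and $0<D(t)\le tD'(t)$. Combined with $e^{D}/(e^{D}-1)>1/D$, this yields $(\log g)'(t)>2\psi(1+t)+1/t>1/t-2\gamma_E>0$; the last step holds because $t<x_0-1<1/(2\gamma_E)$. Together with your observation that $c^{-2t}$ is nondecreasing, this shows the conditional variance is strictly decreasing in $\gamma$ under the stated hypotheses.
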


Similarly, when $\mu - \tfrac{2\pi\zeta}{\mathrm{m}}$ lies in the first or fourth quadrant, the conditional mean and variance are increasing functions of $d$ for $\varepsilon = 1$, and decreasing functions for $\varepsilon = -1$. Conversely, when the angle lies in the second or third quadrant, these relationships are reversed.   
Crucially, the parameter $d$ specifically quantifies the strength of dependence between the circular variable $\Theta$ and linear variable $X$.
When $d=0$, the term $\tanh(d)$ vanishes, and $\operatorname{E}[X \mid \Theta=\theta]$ becomes constant, indicating independence between $X$ and $\Theta$. Consequently, testing the hypothesis $H_0\!: d=0$ offers a robust framework for assessing the statistical significance of circular--linear association.

\section{Interdependence of Linear and Circular Variables}\label{sec5}
Linear and circular variables exhibit fundamentally different characteristics: linear variables take values along the real line with a natural ordering and infinite range, whereas circular variables are defined on the unit circle, exhibiting periodic behavior with no fixed origin. To analyze the association between such variables, specialized techniques that account for the directional nature of circular data are required. The traditional Pearson product-moment correlation \cite{pearson1901} is inappropriate in this context, as it is defined for measuring the association between linear variables only.  

To address this, K. V. Mardia \cite{mardia1976} introduced a correlation coefficient designed specifically to quantify the dependence between a circular and a linear variable. This measure embeds the angular variable through its sine and cosine transformations, 
leading to a squared correlation coefficient denoted by $\rho_{x\theta}^2$, defined as:

\begin{equation}\label{eq:4.1}
\rho^2_{x\theta} = \frac{r_{xc}^2 + r_{xs}^2 - 2r_{xc}r_{xs}r_{cs}}{1 - r_{cs}^2}.
\end{equation}
The terms  
$r_{xc} = \operatorname{corr}\!\left(X, \cos\Theta\right)$,  
$r_{xs} = \operatorname{corr}\!\left(X, \sin\Theta\right)$, and  
$r_{cs} = \operatorname{corr}\!\left(\cos\Theta, \sin\Theta\right)$  
denote the corresponding Pearson correlation coefficients.  
To compute $\rho^2_{x\theta}$, we use the following moment expressions.

\begin{align*}
&\operatorname{E}(X\cos \Theta) = C_2 \frac{\Gamma(1 + \frac{1}{\gamma})}{\gamma \beta^{1 + \gamma}} \sum_{k = 0}^{\mathrm{m}-1} \cos \tfrac{2 \pi k}{\mathrm{m}} (q^\zeta + q^{\mathrm{m}- \zeta})\big/H^{\varepsilon (1 + \frac{1}{\gamma})},\\
&\operatorname{E}(X\sin \Theta) =  C_2 \frac{\Gamma(1 + \frac{1}{\gamma})}{\gamma \beta^{1 + \gamma}} \sum_{k = 0}^{\mathrm{m}-1} \sin \tfrac{2 \pi k}{\mathrm{m}}(q^\zeta + q^{\mathrm{m}- \zeta})\big/H^{\varepsilon (1 + \frac{1}{\gamma})},\\
&\operatorname{E}(X) = C_2 \frac{\Gamma(1 + \frac{1}{\gamma})}{\gamma \beta^{1 + \gamma}} \sum_{k = 0}^{\mathrm{m}-1} (q^\zeta + q^{\mathrm{m}- \zeta})\big/H^{\varepsilon (1 + \frac{1}{\gamma})},\\
&\operatorname{E}(X^2) = C_2 \frac{\Gamma(1 + \frac{2}{\gamma})}{\gamma \beta^{2 + \gamma}} \sum_{k = 0}^{\mathrm{m}-1} (q^\zeta + q^{\mathrm{m}- \zeta})\big/H^{\varepsilon (1 + \frac{2}{\gamma})},\\
&\operatorname{E}(\cos^i \Theta \sin^j\Theta) = \frac{C_2}{\gamma \beta^{\gamma}}\sum_{k = 0}^{\mathrm{m}-1} \cos^i \tfrac{2 \pi k}{\mathrm{m}} \sin^j \tfrac{2 \pi k}{\mathrm{m}} (q^\zeta + q^{\mathrm{m}- \zeta}) \big / H^{\varepsilon},
\end{align*}
where $ H = 1 - \tanh{d} \cos(\mu - \frac{2 \pi \zeta}{\mathrm{m}})$ and $i,j \in \{0,1,2\}.$

\begin{theorem}\label{theo.5}
For the BWBWG distribution, the circular--linear correlation coefficient $\rho^2_{x\theta}$ equals zero whenever $d=0$.
\end{theorem}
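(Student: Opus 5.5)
The plan is to show that at $d=0$ both $r_{xc}$ and $r_{xs}$ vanish. Once this holds, the numerator $r_{xc}^2 + r_{xs}^2 - 2r_{xc}r_{xs}r_{cs}$ in \eqref{eq:4.1} is zero, so $\rho^2_{x\theta}=0$. This requires $1-r_{cs}^2\neq 0$, which is implicit in the definition of Mardia's coefficient and holds for $\mathrm{m}\ge 3$ with $q\in(0,1)$. We can then appeal to Theorem~\ref{theo:2.5}: for $d=0$, $\Theta$ and $X$ are independent. Hence $\operatorname{Cov}(X,\cos\Theta)=\operatorname{E}(X)\operatorname{E}(\cos\Theta)-\operatorname{E}(X)\operatorname{E}(\cos\Theta)=0$, and similarly for $\sin\Theta$. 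Since $X$ has finite positive variance (a Weibull law), the Pearson correlations are $0$ whenever $\cos\Theta$ and $\sin\Theta$ are nondegenerate.

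As a self-contained check that does not rely on Theorem~\ref{theo:2.5}, we can verify this directly from the moment expressions listed above. At $d=0$ we have $\tanh(d)=0$, so $H\equiv 1$ for every $k$. Write $S_c=\sum_k \cos\tfrac{2\pi k}{\mathrm{m}}(q^\zeta+q^{\mathrm{m}-\zeta})$, $S=\sum_k (q^\zeta+q^{\mathrm{m}-\zeta})$, and $A=C_2\Gamma(1+1/\gamma)/(\gamma\beta^{1+\gamma})$. Then $\operatorname{E}(X\cos\Theta)=A S_c$, $\operatorname{E}(X)=AS$, and $\operatorname{E}(\cos\Theta)=C_2 S_c/(\gamma\beta^\gamma)$. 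The normalization gives $1/C_2=(p_1+q^\mathrm{m}p_2)/(\gamma\beta^\gamma)$, and with $H=1$ this equals $S/(\gamma\beta^\gamma)$, so $C_2 S/(\gamma\beta^\gamma)=1$. Therefore $\operatorname{E}(X)\operatorname{E}(\cos\Theta)=AS\cdot C_2S_c/(\gamma\beta^\gamma)=AS_c=\operatorname{E}(X\cos\Theta)$, and the covariance vanishes. The same computation with $\sin$ in place of $\cos$ gives $r_{xs}=0$.

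The main obstacle is degeneracy. If $\cos\Theta$ or $\sin\Theta$ is a.s. constant, the corresponding correlation is undefined. This happens, for instance, when $\mathrm{m}=2$, where $\sin\Theta\equiv0$ and also $r_{cs}$ is undefined. For $\mathrm{m}\le 2$, $\Theta$ takes at most two antipodal values, so $1-r_{cs}^2$ may degenerate as well. I would handle this by stating the convention that a correlation with a degenerate variable is set to $0$. Equivalently, in that case the corresponding terms drop out of \eqref{eq:4.1}, which reduces to $r_{xc}^2$. For $\mathrm{m}\ge3$ and $q\in(0,1)$, every lattice point has positive mass, so $\cos\Theta$ and $\sin\Theta$ are nondegenerate and not perfectly correlated, and the argument above goes through without caveat.
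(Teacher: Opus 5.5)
Your proof is correct and takes essentially the same approach as the paper's: show $\operatorname{cov}(X,\cos\Theta)=\operatorname{cov}(X,\sin\Theta)=0$ at $d=0$, deduce $r_{xc}=r_{xs}=0$ in the nondegenerate case, and substitute into \eqref{eq:4.1} using $|r_{cs}|<1$. The differences are minor. You obtain the vanishing covariances from the independence in Theorem~\ref{theo:2.5} (or from the normalization identity $C_2S/(\gamma\beta^\gamma)=1$), whereas the paper writes out the explicit wrapped-geometric trigonometric moments $(1-q)^2\cos(2\pi\alpha/\mathrm{m})/(1-2q\cos\tfrac{2\pi}{\mathrm{m}}+q^2)$; you also state the nondegeneracy condition ($\mathrm{m}\ge 3$) explicitly, which the paper simply assumes.
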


The result of \autoref{theo.5} highlights the importance of the parameter $d$ in the dependency structure between $\Theta$ and $X$. \autoref{fig:4.1} shows contour plots of the correlation $\rho^2_{x\theta}$ as a function of $(d,\gamma)$ for fixed values $\mathrm{m} = 10$, $\alpha = 2$, $\beta = 0.5$, $\varepsilon = 1$, and $\mu = \pi/1.62$.

Subplot (a), corresponding to $q=0.1$, exhibits strong contour variation, reaching high correlation values ($\geq 0.85$) that change rapidly. As $q$ increases in subplots (b) and (c), the contour lines spread out and the maximum values drop significantly (to approximately 0.35). This indicates that increasing $q$ diminishes the correlation, making it significantly less sensitive to parameter changes. The parameter $d$ exhibits a positive effect on the correlation. While the magnitude of this correlation is modulated by $\gamma$, increasing $d$ consistently raises $\rho^2_{x\theta}$, confirming its role in the dependence between the variables.  

\begin{figure}[!ht] 
 \centering    
    \centering
    \includegraphics[width= 1\textwidth, height=4.5cm]{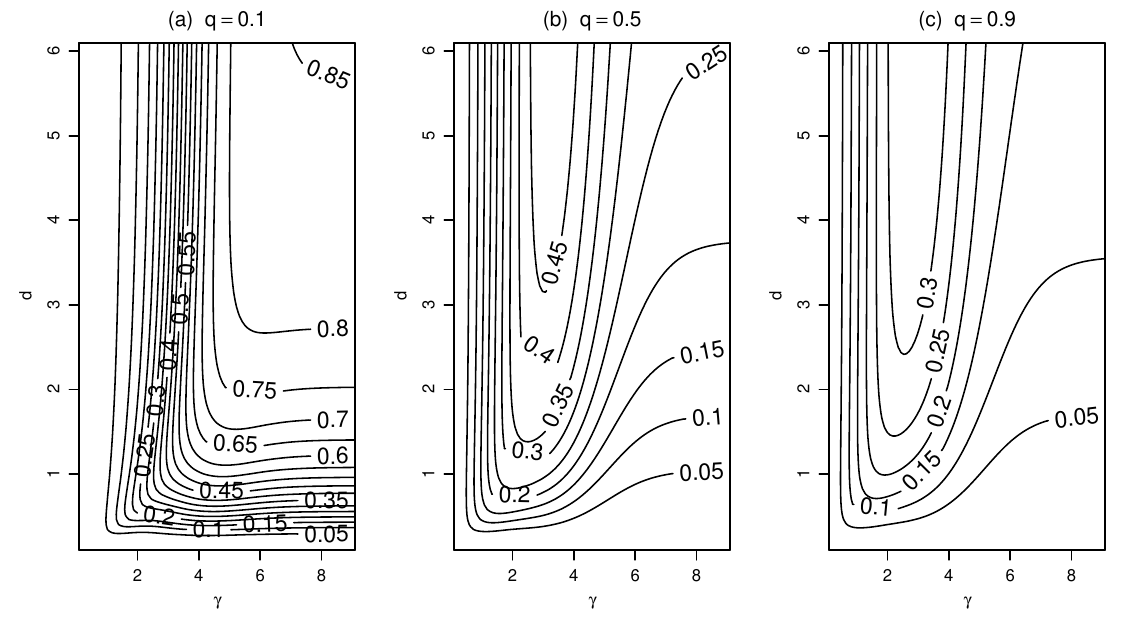} 
    \caption{Contour plot of circular--linear correlation $ \rho^2_{x\theta} $}
    \label{fig:4.1}
\end{figure}


\section{Simulation Study}\label{sec6}
In this section, the finite-sample properties of the maximum likelihood estimators for the BWBWG distribution are investigated via a simulation study. We consider varying sample sizes $n \in \{50, 200, 500\}$ across three distinct parameter configurations for $\varepsilon, \alpha, q, d, \mu, \beta$, and $\gamma$. For each case, $1000$ independent Monte Carlo replicates are generated. The random number generation algorithm, the estimation procedure used in each replicate, and a detailed discussion of the results are presented below.

\subsection{Random Number Generation}
 To generate random samples from the BWBWG distribution of $(\Theta, X)$, we employ a two-stage procedure based on the factorization of $f(\theta,x)$ into $P(\Theta = \theta)f(x \mid \Theta = \theta)$.
\begin{itemize}
    \item \textbf{Stage I:} The discrete circular variate $ \Theta = 2\pi K/\mathrm{m}$ is generated from its marginal pmf given in \eqref{eq:3.1}, using the inverse transform sampling method. Given the cumulative probabilities $F_k = \sum_{j=0}^{k} P(\Theta = 2\pi j/\mathrm{m}) $, we generate a uniform random number $u \sim U(0,1)$ and set $K = k$ such that
$
    F_{k-1} < u \le F_{k}
$ 

\item \textbf{Stage II:} Conditioned on the realized value $2\pi k/\mathrm{m}$ obtained in the first stage, the continuous linear variate $X$ is generated from the conditional density function given in \eqref{eq:3.11}, by using the inverse transform sampling
method. We generate a uniform random number $v \sim U(0,1)$ and compute $X$ using the closed-form inversion: 
\begin{equation*}
 X = \frac{1}{\beta} \left(\frac{-\log(1 - v)}{\big(1 - \tanh (d) \cos (\mu - \frac{2 \pi \zeta}{\mathrm{m}})\big)^{\varepsilon}}\right)^{\frac{1}{\gamma}}   
\end{equation*}

\end{itemize}

\subsection{Estimation Procedure}
In the BWBWG distribution, the parameters $\varepsilon$ and $\alpha$ are discrete, whereas $q, d, \mu, \beta,$ and $\gamma$ are continuous. Therefore, the MLE procedure is performed in two steps.

\begin{itemize}
   \item \textbf{Step I:} Each of the possible combinations $(\varepsilon, \alpha) \in \{-1, 1\} \times \mathbf{Z}_\mathrm{m}$ is substituted into the log-likelihood function. This yields $2\mathrm{m}$ distinct objective functions, each involving the five continuous parameters.

    \item \textbf{Step II:} Each of these functions is numerically maximized with respect to the continuous parameters. The final estimates are obtained by selecting the combination of discrete and continuous parameters that yields the global maximum log-likelihood value across all evaluations.
\end{itemize}

The numerical maximization in Step II is carried out using the `optim' function (method `L-BFGS-B') in the R programming environment.

\subsection{Results and Discussions}
\autoref{tab:4.1} presents the simulation results, with standard errors provided in parentheses. The reported standard errors represent the sample standard deviation of the $1,000$ parameter estimates obtained from the simulations. Additionally, the frequencies of the discrete parameter estimates across all simulations are summarized in the last two columns of this table.

\begin{table}[!ht]
\caption{Estimates (with standard deviation in parentheses) for BWBWG }\vspace{-2mm}
\label{tab:4.1}
\centering
\begin{tabular}{cccccccc}
\toprule
 \multicolumn{1}{c}{Sample size} & \multicolumn{7}{c}{MLE and its standard deviation} \\
 \cmidrule(lr){1-8}
 n & $\hat{d}_{MLE}$ & $\hat{\beta}_{MLE}$ & $\hat{\gamma}_{MLE}$ & $\hat{\mu}_{MLE}$ & $\hat{q}_{MLE}$  & $\hat{\alpha}_{MLE}$ & $\hat{\varepsilon}$\\
 \hline & $d = 0.2$ & $\beta = 0.8$ & $\gamma = 1$ & $\mu = 3.5$ & $q = 0.2$ & $\alpha = 2$ & $\varepsilon = 1$\\
 \hline
  50 &  2.692 & 1.262 & 1.052 & 6.280  & 0.223 & (0, 1, 997, 2,\dots,0) & (537, 463)\\
     & (3.212) & (0.880) & (0.122) & (2.060) & (0.070) & & \\
 200 & 1.363 & 0.941 & 1.010 & 2.584 & 0.208 & (0, 0, 1000,\dots,0) & (554, 446)\\
     & (2.433) & (0.288) & (0.055) & (2.422) & (0.031) & & \\
 500 & 0.508 & 0.866 & 1.005 & 3.866 & 0.204 & (0, 0, 1000,\dots,0) & (533, 467) \\
     & (1.100) & (0.187) & (0.035) & (2.594) & (0.020) &   &  \\
 \hline
 True value & $d=0.5$ & $\beta = 0.5$ & $\gamma = 1$ & $\mu = 3.5$ & $q=0.5$ &  $\alpha = 2$ & $\varepsilon = 1$\\
 \hline
  50 & 0.573 & 0.509 & 1.063 & 3.555 & 0.496 & (1, 41, 901, 57,\dots,0) & (428, 572) \\
     & (0.235) & (0.108) & (0.125) & (1.987) & (0.066) & &  \\
 200 &  0.506 & 0.493 & 1.014 & 3.529 & 0.500 & (0, 0, 1000,\dots,0)  & (251, 749)\\
     & (0.109) & (0.046) & (0.058) & (1.195) & (0.031) & & \\
 500 & 0.500  &  0.496 & 1.005 & 3.511 & 0.499 & (0, 0, 1000,\dots,0) & (144, 856) \\
     & (0.074) & ( 0.031) & (0.035) & (0.832) & (0.020) &  & \\
 \hline
 True value & $d= 1$ & $\beta = 0.2$ & $\gamma = 1.5$ & $\mu = 1.5$ & $q=0.8$ &  $\alpha = 5$ & $\varepsilon = -1$ \\ 
\hline
 50  & 1.106 & 0.200 & 1.521 & 1.375 & 0.780 & (494, 191, 87, 15, 6, & (991, 9)\\
 & & & & & & 9, 9, 12, 37, 140) &  \\
   & (0.395) & (0.011) & (0.086) & (0.742) & (0.049) & & \\ 
 200 & 1.086  & 0.201 & 1.521 & 1.409  & 0.783 & (466, 193, 68, 28, 2, & (988, 12) \\
 & & & & & & 9, 11, 13,  45, 165) &  \\
      & (0.394) & (0.011) & (0.084) & (0.781) & (0.047) & & \\
 500 & 1.025 & 0.200 & 1.505  & 1.465 & 0.798 & (730, 141, 14, 2, 1, & (1000, 0) \\minai
 
 & & & & & & 0, 0, 0, 8, 104) & \\
    & (0.105) & (0.007) & (0.050) & (0.377) & (0.035) & & \\
\hline
\end{tabular}
\end{table}

The estimates of the discrete parameters $\alpha$ and $\varepsilon$ exhibit distinct behaviors with respect to the parameters $q$ and $d$, respectively. For small values of $q$ (e.g., $q = 0.2$), the estimates of $\alpha$ tend to be more accurate than those obtained for larger values of $q$ (e.g., $q = 0.8$). To study this, the distributions for $q \in \{0,1\}$ are derived by evaluating the limits of the joint density given in \eqref{eq:2.5}, and the behaviour of the estimates is expected. This is because when $q=0$, the BWBWG distribution in \eqref{eq:2.5} collapses to the following form

\begin{align}\label{eq:5.1}
\resizebox{1\textwidth}{!}{$ 
f(\theta,x) =
\begin{cases}
\gamma \beta^\gamma \Big(1 - \tanh(d)\cos(\mu)\Big)^\varepsilon x^{\gamma-1} \exp\Big(-(\beta x)^{\gamma}\big(1 - \tanh(d)\cos (\mu)\big)^\varepsilon\Big), & \theta = \frac{2 \pi \alpha}{\mathrm{m}},\quad x \in (0, \infty) \\
0, \quad \text{otherwise}.
\end{cases}
$}
\end{align}

This is a degenerate distribution supported entirely on the ray $\{2\pi\alpha/\mathrm{m}\} \times (0, \infty)$, which makes the parameter $\alpha$ highly identifiable.

In contrast, when $q = 1$, the distribution flattens with respect to $\theta$. In this case, the distribution simplifies to  
\begin{equation*}
\resizebox{0.85\textwidth}{!}{$
f(\theta,x) = \frac{\gamma \beta^\gamma}{p_3} x^{\gamma-1} \exp\bigg(-(\beta x)^{\gamma}\Big(1 - \tanh(d)\cos\big(\mu - \frac{ 2 \pi \zeta}{\mathrm{m}}\big)\Big)^\varepsilon\bigg).
$}
\end{equation*}

The parameter $q$ also influences estimates of the continuous parameters. As seen in the degenerate case given in \eqref{eq:5.1}, for $ q= 0$, the joint density is supported entirely on the ray $\{2\pi\alpha/\mathrm{m}\} \times (0, \infty)$, which removes the directional variability needed to estimate the orientation and dependence structure. Consequently, as $q$ increases, this variability is restored, leading to more precise estimates. The effect of the parameter $q$ can be seen in \autoref{tab:4.1} (specifically, note the decreasing standard errors of $\mu$ and $d$).

Similarly, the estimates of $\varepsilon$ are more accurate for larger values of $d$ (e.g., $d = 1$), and become less reliable when $d$ is small (e.g., $d = 0.2$). This behavior occurs because the role of $\varepsilon$ in the BWBWG distribution is governed by the term
\begin{equation*}
\left(1 - \tanh(d)\cos\big(\mu - \tfrac{2\pi\zeta}{\mathrm{m}}\big)\right)^{\varepsilon},
\end{equation*}
which is highly dependent on the parameter $d$. Specifically, if $d = 0$, then the influence of this term on the distribution vanishes. Consequently, for small values of $d$, the parameter $\varepsilon$ becomes difficult to identify, leading to reduced estimation accuracy. 


\section{Data Analysis}\label{sec7}
This section presents the practical applicability of the proposed methodology through the analysis of two distinct datasets: a wind dataset from Gwalior, India, and an SO$_2$ concentration-acrophase dataset from Delhi, India. We fit the two distinct BWBWG models, corresponding to the discrete parameter values $\varepsilon = -1$ and $\varepsilon = 1$, to both datasets. Additionally, for the SO$_2$ concentration-acrophase dataset, we perform the regression analyses for both values of $\varepsilon$. Finally, we compare the results with the discretized versions of some well-known models.

\subsection{Application to Wind Dataset}
The wind dataset consists of measurements of wind direction and speed recorded in Gwalior, India, between March~22 and May~4, 2024. Observations were taken approximately every three hours, and time points with no wind were excluded from the dataset. The wind‐direction measurements are categorical and are encoded using $16$ integers, each corresponding to one of the $16$ standard compass sectors; the mapping is shown in \autoref{fig:5.1}. 
\begin{figure}[!ht]
    \centering
    \includegraphics[width=0.8\linewidth]{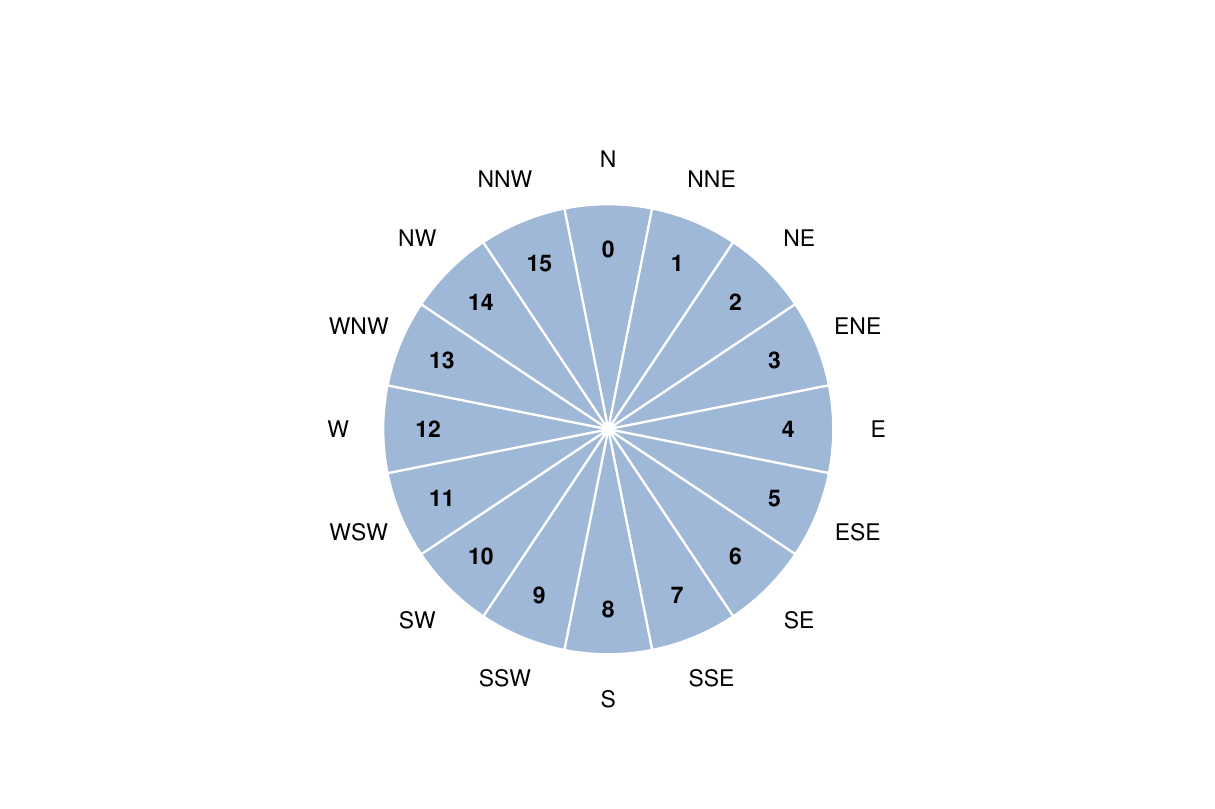}\vspace{-5 mm}
    \caption{Mapping of wind directions to their integer representations}
    \label{fig:5.1}
\end{figure}

\subsubsection{BWBWG Models Fitting}
The BWBWG model with $\varepsilon = 1$ has a smaller Akaike Information Criterion (AIC) value than the model with $\varepsilon = -1$. Therefore, we consider the BWBWG model (with $\varepsilon = 1$) for further analysis. The estimates of the parameters are reported in \autoref{tab:5.1}, and the corresponding heatmaps of the estimated density are displayed in \autoref{fig:5.2}.

\begin{table}[!ht]
\centering
\small
\caption{Estimates of the BWBWG  parameters ($\varepsilon = 1$) for the wind dataset.}\vspace{-3mm} 
\label{tab:5.1}
\begin{tabular}{m{4cm} | m{6cm} }  
\hline
 $\textbf{Parameters}$ & \textbf{Estimates (Sd Error)} \\
 \hline
 $\hat{q}_{MLE}$ & 0.873 (0.034)  \\
 $\hat{d}_{MLE}$ & 0.823 (0.073) \\
 $\hat{\mu}_{MLE}$ & 2.392 (0.073) \\
 $\hat{\beta}_{MLE}$ & 0.298 (0.013) \\
 $\hat{\gamma}_{MLE}$ & 1.781 (0.080) \\ 
 $\hat{\alpha}_{MLE}$ & 7 \\ 
 \hline
\end{tabular}
\end{table}

\begin{figure}[!ht]
 \centering
 \includegraphics[width=0.9\linewidth, height=0.3\textheight]{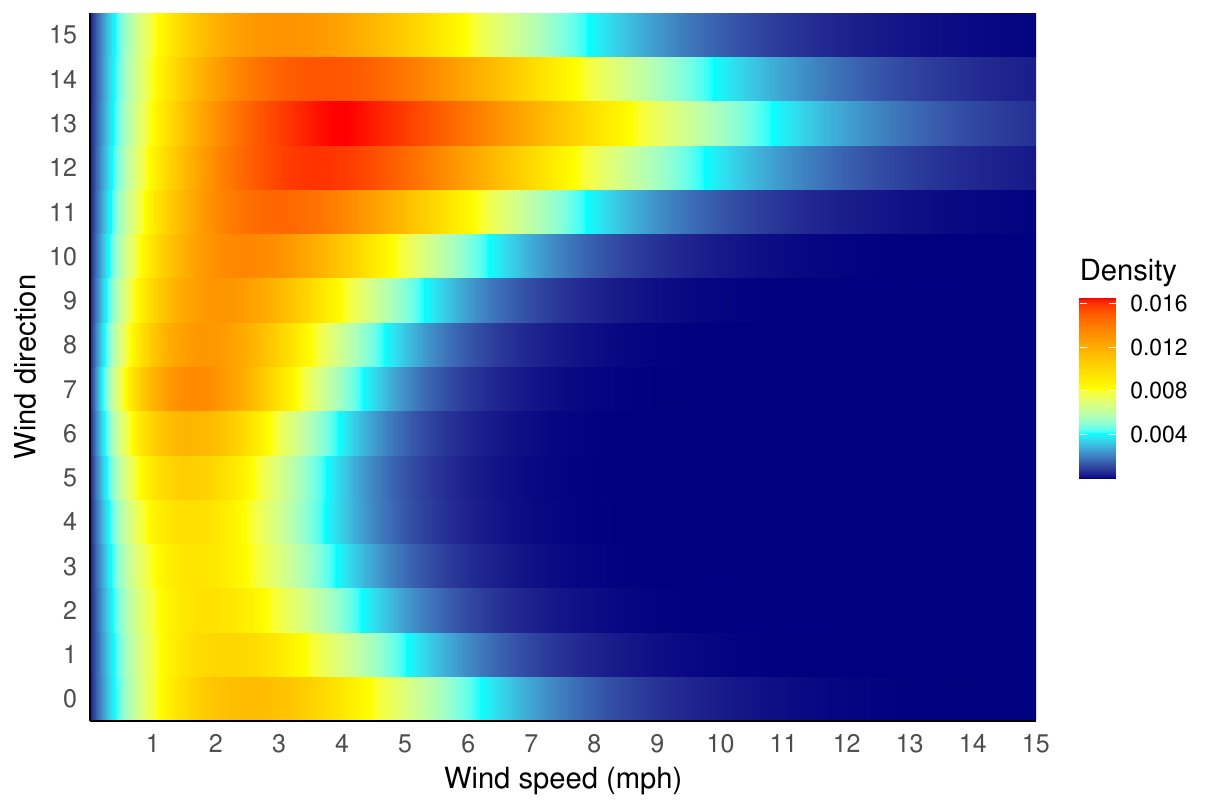}
\caption{Heatmap of the estimated BWBWG density ($\varepsilon = 1$).}
\label{fig:5.2}
\end{figure}

\textit{Interpretation of Estimates:} The estimate $\hat{d}_{MLE} = 0.823$ indicates dependence between wind direction and wind speed. As shown in the heatmap in \autoref{fig:5.2}, the highest density corresponds to westerly winds with low speeds, typically between 2 and 4 miles per hour. A secondary, lower-density region associated with easterly winds is also visible, while winds blowing directly from the north or south occur much less frequently.

For comparison purposes, we also apply the discretized circular formulation of existing models to the wind dataset. Since no model of the same mixed type (discrete circular and continuous linear) is available in the literature, the competing models are adapted by numerically integrating over the circular variable within the corresponding intervals. The resulting negative log-likelihood and AIC values are summarized in \autoref{tab:5.2}.

\begin{table}[!ht]
\centering
\caption{Comparison of the BWBWG model with discretized versions of existing models: Weibull sine-skewed von Mises (WeiSSVM), Generalized Gamma sine-skewed von Mises (GGSSVM), and Johnson--Wehrly (JW).}
\begin{tabular}{m{4cm} | m{4cm} | m{2.5cm} }
\hline
\textbf{Model} & \textbf{Negative Log-Likelihood} & \textbf{AIC} \\
\hline
BWBWG ($\varepsilon = -1$)  & 1309.537 & 2631.074 \\
BWBWG ($\varepsilon = 1$)  & 1295.664 & 2603.328 \\
Discretized WeiSSVM  & 1297.514 & 2605.028 \\
Discretized GGSSVM   & 1297.166 & 2606.332 \\
Discretized JW   &  1356.907 &  2719.814 \\
\hline
\end{tabular}
\label{tab:5.2}
\end{table}

The results in \autoref{tab:5.2} show that the BWBWG model ($\varepsilon = 1$) yields the lowest negative log-likelihood and AIC values. Therefore, our direct mixed-type (discrete circular and continuous linear) formulation outperforms the discretized versions of these existing models for the wind dataset.




\subsection{Application to \texorpdfstring{SO$_2$} Concentration-Acrophase Dataset}
The SO$_2$ Concentration-Acrophase Dataset (SO$_2$ dataset) is derived from high-frequency air quality monitoring records in Delhi, focusing specifically on the diurnal behavior of Sulfur Dioxide (SO$_2$) concentrations. The raw data is sourced from Kaggle, and the source link is `` \url{https://www.kaggle.com/datasets/arnavtripathi01/air-quality-and-metrological-data-in-india-2024}''. This dataset consists of hourly measurements spanning from January 1, 2024, to December 31, 2024. For each 24-hour period, we extracted two key variables:
\begin{enumerate}
    \item \textbf{Linear Variable ($X$):} The peak intensity of the daily cycle, defined as the maximum observed SO$_2$ concentration for that day. This variable represents the highest pollution level recorded within the 24-hour window.   
    \item \textbf{Circular Variable ($\Theta$):} The acrophase, representing the time of peak intensity. It is calculated as $\Theta = 2 \pi k/24$, where $k$ is the specific hour of the day at which the maximum concentration $X$ was recorded.
\end{enumerate}

The processed data consists of 366 observations and the descriptive Statistics is presented in \autoref{tab:5:3}, where mean values are reported in $\mu g/m^3$. Frequency denotes the count of daily SO$_2$ maxima.  
\begin{table}[!ht]
\centering
\caption{Hourly frequency and mean intensity of daily SO$_2$ peaks.}
\label{tab:5:3}
\begin{tabular}{l|cccccccccccc}
\hline
\multicolumn{13}{c}{\textbf{Morning to Noon (00:00 -- 11:00)}} \\
\hline
\textbf{Hour} & \textbf{0} & \textbf{1} & \textbf{2} & \textbf{3} & \textbf{4} & \textbf{5} & \textbf{6} & \textbf{7} & \textbf{8} & \textbf{9} & \textbf{10} & \textbf{11} \\ 
\textbf{Freq.} & 43 & 10 & 16 & 12 & 1 & 6 & 4 & 16 & 19 & 23 & 19 & 26 \\ 
\textbf{Mean} & 19.8 & 20.9 & 17.0 & 14.5 & 20.1 & 20.7 & 35.4 & 35.1 & 30.6 & 23.3 & 20.6 & 18.5 \\ 
\hline
\multicolumn{13}{c}{\vspace{-2mm}} \\ 
\multicolumn{13}{c}{\textbf{Afternoon to Night (12:00 -- 23:00)}} \\
\hline
\textbf{Hour} & \textbf{12} & \textbf{13} & \textbf{14} & \textbf{15} & \textbf{16} & \textbf{17} & \textbf{18} & \textbf{19} & \textbf{20} & \textbf{21} & \textbf{22} & \textbf{23} \\ 
\textbf{Freq.} & 20 & 12 & 15 & 6 & 5 & 8 & 1 & 11 & 10 & 11 & 35 & 37 \\ 
\textbf{Mean} & 17.4 & 16.2 & 16.3 & 18.5 & 14.9 & 14.4 & 14.2 & 15.1 & 15.4 & 17.9 & 19.7 & 19.4 \\ 
\hline
\multicolumn{13}{l}{\footnotesize}
\end{tabular}%
\end{table}

\subsubsection{BWBWG Models Fitting}
The BWBWG model with $\varepsilon = -1$ yields a smaller AIC value than the model with $\varepsilon = 1$. Therefore, we select the BWBWG model (with $\varepsilon = -1$) for further analysis. The estimated parameters are presented in \autoref{tab:5.6}, and the corresponding estimated joint density is displayed in \autoref{fig:5.6}. 

\begin{table}[!ht]
\centering
\small
\caption{Estimates of the BWBWG  parameters ($\varepsilon = -1$) for the SO$_2$ dataset.}\vspace{-3mm} 
\label{tab:5.6}
\begin{tabular}{m{4cm} | m{6cm} }  
\hline
 $\textbf{Parameters}$ & \textbf{Estimates (Sd Error)} \\
 \hline
 $\hat{q}_{MLE}$ & 0.777 (0.016)  \\
 $\hat{d}_{MLE}$ & 1.322 (0.104) \\
 $\hat{\mu}_{MLE}$ & 6.099 (0.040) \\
 $\hat{\beta}_{MLE}$ & 0.043 (0.001) \\
 $\hat{\gamma}_{MLE}$ & 2.832 (0.092) \\ 
 $\hat{\alpha}_{MLE}$ & 18 \\ 
 \hline
\end{tabular}
\end{table}

\begin{figure}[!ht]
 \centering
 \includegraphics[width=0.9\linewidth, height=0.3\textheight]{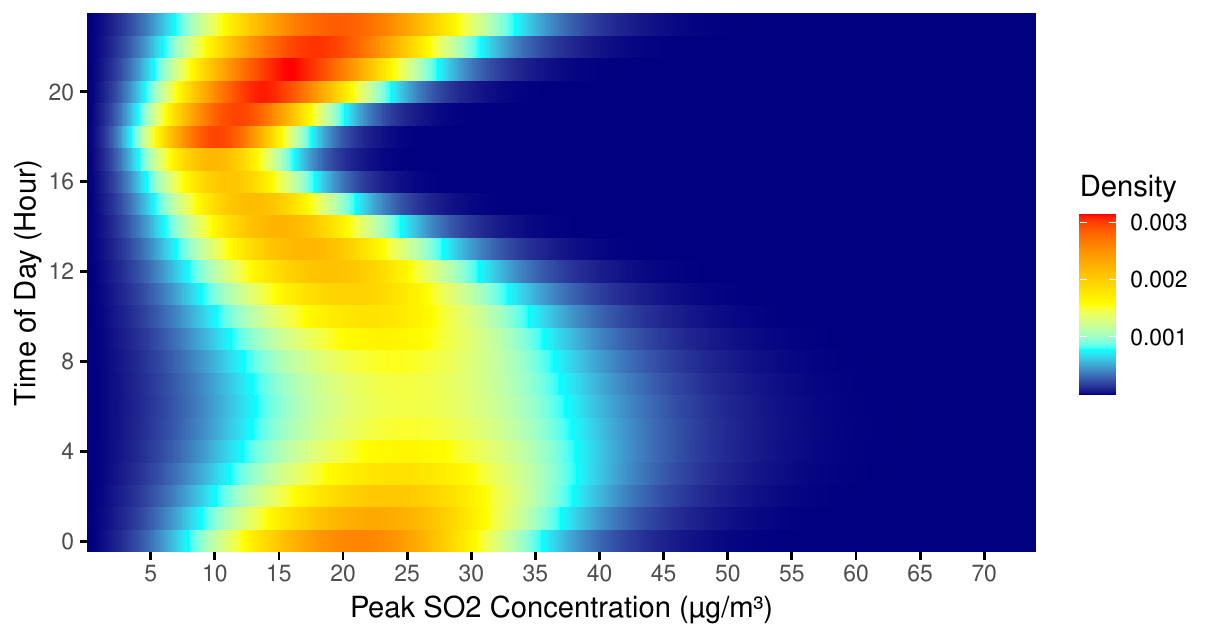}
\caption{Heatmap of the estimated BWBWG density ($\varepsilon = -1$).}
\label{fig:5.6}
\end{figure}

\textit{Interpretation of Estimates:} The estimate $\hat{d}_{MLE} = 1.322 $ confirms a dependence between the peak intensity of SO$_2$ concentration and the daily cycle. As illustrated in \autoref{fig:5.6}, higher concentrations are not uniformly distributed but cluster around specific hours, specifically peaking between 16:00 and 20:00 hours. 

For comparative analysis, we also fitted the discretized circular formulation of some existing continuous models; these results are summarized in \autoref{tab:5.7}. 

\begin{table}[!ht]
\centering
\caption{Comparison of the BWBWG model with discretized versions of existing models: Weibull sine-skewed von Mises (WeiSSVM), Generalized Gamma sine-skewed von Mises (GGSSVM), and Johnson--Wehrly (JW).}
\begin{tabular}{m{4cm} | m{4cm} | m{2.5cm}}
\hline
\textbf{Model} & \textbf{Negative Log-Likelihood} & \textbf{AIC} \\
\hline
BWBWG ($\varepsilon = -1$)  & 2381.861 & 4775.722 \\
BWBWG ($\varepsilon = 1$)  & 2401.380 & 4814.760 \\
Discretized WeiSSVM  & 2422.218  & 4854.436 \\
Discretized GGSSVM   & 2376.059  & 4764.118 \\
Discretized JW   &  2624.320 &  5254.641 \\
\hline
\end{tabular}

\label{tab:5.7}
\end{table}

The results in \autoref{tab:5.7} show that both variants of the BWBWG model ($\varepsilon = -1$ and $\varepsilon = 1$) significantly outperform the discretized WeiSSVM and JW models. While the discretized GGSSVM achieves the lowest statistics, our direct mixed-type (discrete circular and continuous linear) formulation yields the second-best performance for the SO$_2$ dataset.

\subsubsection{Regression Analysis}
The objective of the regression analysis is to describe how the expected linear response varies with the circular variable by fitting the regression model given in \eqref{eq:3.12}. Specifically, the conditional expectation represents the mean peak SO$_2$ concentration conditioned on the time of occurrence. 

The regression model is fitted to the SO$_2$ concentration dataset for both values of $\varepsilon$. The parameter estimates were obtained by MLE, and the model with $\varepsilon = 1$ yields a lower AIC value compared to the model with $\varepsilon = -1$. Therefore, we select the regression model ($\varepsilon = 1$) for further analysis. The resulting estimates are reported in \autoref{tab:5.4}, and the corresponding fitted regression curve is presented in \autoref{fig:5.5}.

\begin{table}[!ht]
\centering
\small
\caption{Estimated parameters of the regression model ($\varepsilon = 1$) for the SO$_2$ dataset}\vspace{-3mm} 
\label{tab:5.4}
\begin{tabular}{ m{4cm} | m{6cm}  }  
\hline
 $\textbf{Parameters}$ & \textbf{Estimates (Sd Error) } \\
 \hline
 $\hat{d}_{MLE}$ & 0.895 (0.114) \\ 
 $\hat{\mu}_{MLE}$ & 1.456 (0.075)  \\
 $\hat{\beta}_{MLE}$ & 0.047  (0.001) \\
 $\hat{\gamma}_{MLE}$ & 2.705 (0.089 ) \\  
 \hline
\end{tabular}
\end{table}

\begin{figure}[!ht]
    \centering
    \includegraphics[width= 0.9\linewidth]{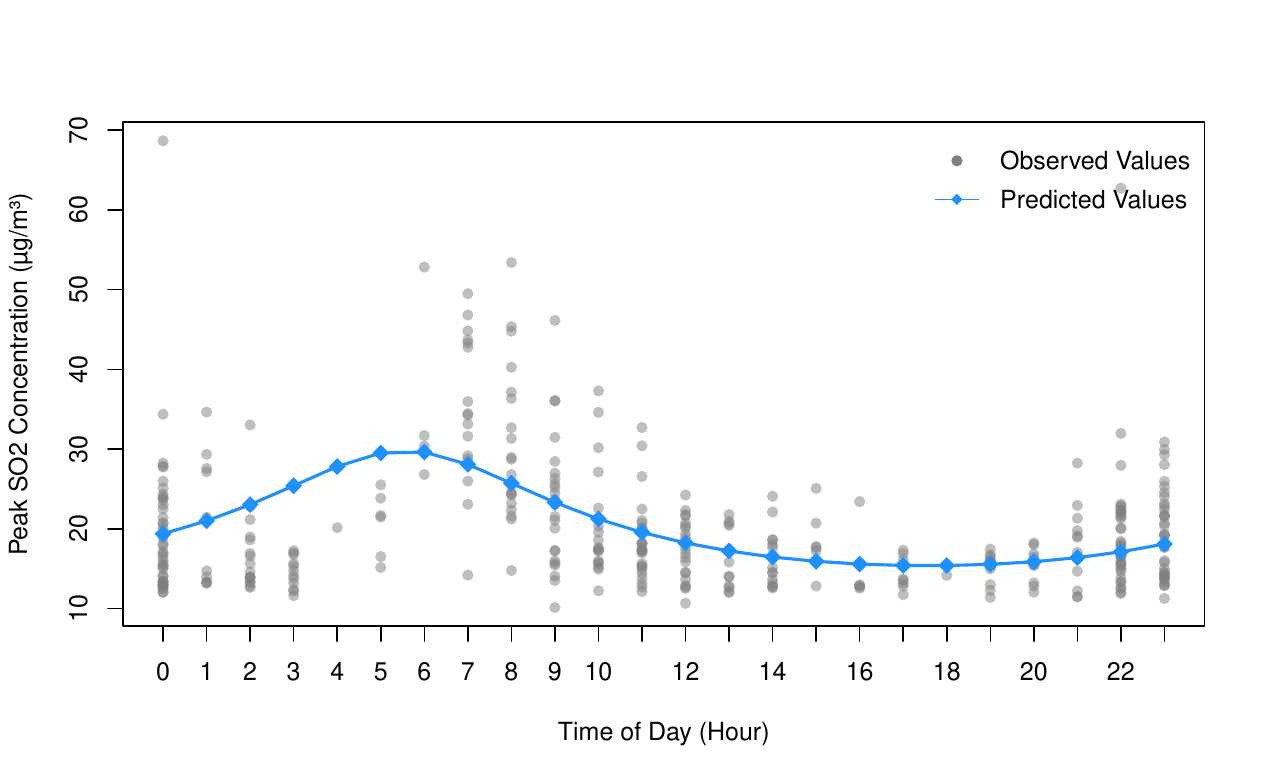}
    \caption{Diurnal variation of maximum daily SO$_2$ concentrations fitted with a circular--linear regression curve ($\varepsilon = 1$).}
    \label{fig:5.5}
\end{figure}

\textit{Testing for Independence:} As we have discussed in Section \ref{sec4.1}, the parameter $d$ quantifies the strength of the dependence between the circular variable $\Theta$ (time of day) and the linear variable $X$ (SO$_2$ concentration). Furthermore, when $d=0$, the conditional expectation becomes constant, indicating independence between $X$ and $\Theta$. To assess the statistical significance of this association, we tested the null hypothesis $H_0: d=0$ using the Likelihood Ratio Test. For the model ($\varepsilon = 1$), the test statistic is $ \Lambda = 2 \big( \text{NLL}_{\mathrm{H}_0} - \text{NLL} \big) $, where $\text{NLL}_{\mathrm{H}_0}$ and $\text{NLL}$ represent the negative log-likelihood values of the restricted model (under the null hypothesis) and the unrestricted model, respectively. The resulting test statistic is $\Lambda = 2(1278.313 - 1236.953) = 82.720$. Evaluated against a $\chi^2_1$ distribution, this yields a $p$-value of $< 0.001$. Consequently, we strongly reject the null hypothesis. 

\textit{Interpretation of Estimates:} The estimate $\hat{d}_{MLE} = 0.895$ confirms a circular--linear dependence, indicating that the magnitude of SO$_2$ peaks is conditional on their timing.

\autoref{fig:5.5} illustrates the diurnal behavior of SO$_2$ concentrations as captured by the discrete circular--linear regression model.

For comparison, we also fitted the discretized versions of the regression models proposed by \cite{johnson1978} and \cite{abe2017}; the results are presented in \autoref{tab:5.5}.

\begin{table}[!ht]
\centering
\caption{Comparison of the regression model with the discretized versions of the Johnson \& Wehrly (1978) and Abe \& Ley (2017) regression models for the SO$_2$ dataset.}
\begin{tabular}{m{4cm} | m{4cm} | m{2.5cm} }
\hline
\textbf{Model} & \textbf{Negative Log-Likelihood} & \textbf{AIC} \\
\hline
\textbf{Model ($\varepsilon = -1$)} &   1240.328 & 2488.656 \\
\textbf{Model ($\varepsilon = 1$) } &  1236.953 & 2481.906 \\
\textbf{Abe \& Ley} &  1236.953 & 2481.906 \\
\textbf{Johnson \& Wehrly} &  1460.103 &  2926.205 \\
\hline
\end{tabular}
\label{tab:5.5}
\end{table}

The results in \autoref{tab:5.5} demonstrate that the regression model for both values of $\varepsilon$ significantly outperforms the discretized Johnson \& Wehrly model. Notably, the  model ($\varepsilon = 1$) yields negative log-likelihood and AIC values identical to those of the discretized Abe \& Ley model for the SO$_2$ dataset.

In summary, we compared the proposed methods against discretized versions of well-known models. The results indicate that the proposed approach either outperforms or performs comparably to these alternatives. Furthermore, it offers a computationally efficient solution for datasets involving discrete circular and continuous linear components. Crucially, the proposed method avoids the approximation errors and computational burden inherent in the discretization process, thereby providing a robust framework for modeling circular--linear data.


\section{Conclusion}\label{sec8}

This article introduces a new methodology for handling challenges in cylindrical data, specifically those involving discrete circular and continuous linear variables. The proposed BWBWG distribution is studied in detail, with emphasis on parameter interpretations, dependency structures, and the associated marginal and conditional distributions. The performance of the proposed model is evaluated through applications to wind and SO2 Concentration-Acrophase dataset, where the results are compared against discretized versions of well-known models. Beyond these datasets, the proposed framework has wide applicability to other domains that combine directional and linear measurements, including ocean-current studies (direction and magnitude), animal movement analysis (direction and step length), and aviation datasets (flight heading and altitude or speed), among others.

In the case study, we analyzed a wind dataset from Gwalior, India, containing measurements of wind direction and speed using the proposed joint model. Additionally, for the SO$_2$ Concentration-Acrophase dataset from Delhi, India, we applied both the joint model and the circular--linear regression formulation to capture the diurnal behavior of Sulfur Dioxide (SO$_2$) concentrations.

Future studies may expand on the current statistical framework. Specifically, while the bivariate models introduced by \cite{dhakad2026} are defined for ordinal circular variables, the BWBWG model is defined for an ordinal circular and a continuous linear variable. However, there is currently no multivariate model for multiple ordinal circular and continuous linear variables. Furthermore, the proposed framework can also be extended to account for temporal dependence in time-series data.

\section*{Funding statement}
\vspace{-2mm}
 This project is partially funded by the Government of India as part of the Start-up Research Grant. The funding is provided through the Science of Engineering Board of the Department of Science and Technology to Dr. Jayant Jha (SRG/2022/000151). Brajesh Kumar Dhakad is funded by the National Board for Higher Mathematics (NBHM) (0203/4/2022/R\&D-II/ 3660). The funders had no role in the methodological development, data analysis, decision to publish, or preparation of the manuscript.
 \vspace{-0.5cm}

\section*{Disclosure statement}
\vspace{-0.2cm}
The authors report there are no competing interests to declare.
 \vspace{-0.4cm}
\section*{Data availability statement}
\vspace{-0.3cm}
The data supporting the findings of this study are available upon request.
\vspace{-0.4cm}


\begin{appendices} \label{app}

\section{Proof of \autoref{theo:2.5} }
\begin{proof}
Assume $d = 0$.  
Then $\tanh (d)=0$, and the joint density \eqref{eq:2.5} simplifies to
\begin{equation*}
f(\theta,x) = \frac{\gamma \beta^\gamma (1-q)}{(1-q^\mathrm{m})(1+q)}\Big(q^\zeta + q^{\mathrm{m}-\zeta}\Big)x^{\gamma-1} \exp\Big(-(\beta x)^{\gamma}\Big)    
\end{equation*}
 The marginal distribution of $\Theta$ is obtained by integrating the simplified joint density over the support of $X$, yielding
 \begin{equation*}
 P( \theta) = \frac{(1-q)}{(1-q^\mathrm{m})(1+q)} \Big(q^\zeta + q^{\mathrm{m}-\zeta}\Big)     
 \end{equation*}
The marginal density of $X$ is obtained by summing the simplified joint density over the support of $\Theta$, yielding
 \begin{equation*}
  f(x) = \gamma \beta^\gamma x^{\gamma-1} \exp\Big(-(\beta x)^{\gamma}\Big)   
 \end{equation*}
Since the joint density factorizes as
\[
f(\theta,x)
= P(\theta)\,f(x), 
\]
the random variables $\Theta$ and $X$ are independent.

Conversely, suppose that the random variables $\Theta$ and $X$ are independent. It follows that the conditional distribution of $X$ given $\Theta = \theta$,
\begin{align*}
    f_{X|\theta}(x \mid \theta) &= \gamma \beta^{\gamma} \Big(1 - \tanh(d) \cos\Big(\mu-\frac{2\pi \zeta}{\mathrm{m}}\Big)\Big)^\varepsilon x^{\gamma-1} \\&\quad \exp\left\{-(\beta x)^ \gamma \Big(1 - \tanh(d) \cos\Big(\mu-\frac{2\pi \zeta}{\mathrm{m}}\Big)\Big)^\varepsilon \right\},
\end{align*}
must be invariant with respect to $\theta$. This implies the coefficient $\tanh(d)$ of $\cos\Big(\mu-\frac{2\pi \zeta}{\mathrm{m}}\Big)$ must be 0; thus, $d = 0$.

\end{proof}


\section{Proof of existence of two modes}
\begin{proof}
 Let $y \in \mathbf{Z}_\mathrm{m}$, and we define $k_1 = (\alpha + y) \bmod \mathrm{m}$ and $k_2 = (\alpha - y) \bmod \mathrm{m}$. Then, we evaluate the $\zeta$ for both $k_1$ and $k_2$:

 \begin{align*}
\zeta_{k_1} &= ((\alpha + y) - \alpha) \bmod \mathrm{m} = y \\
\zeta_{k_2} &= ((\alpha - y) - \alpha) \bmod \mathrm{m} = -y \bmod \mathrm{m} = \mathrm{m} - y
\end{align*}

Using Equation \eqref{eq:3.7} and fixing the parameter $\mu = \pi$, we evaluate the conditional pmf of $\Theta$ given $X = x$ at the corresponding angles $\theta_1 = \frac{2 \pi k_1}{\mathrm{m}}$ and $\theta_2 = \frac{2\pi k_2}{\mathrm{m}}$

\begin{align*} 
 P(\theta_1 \mid x) \propto (q^y + q^{\mathrm{m}-y}) \exp\Big(-(\beta x)^{\gamma}\Big(1 - \tanh(d) \cos\Big(\pi-\frac{2 \pi y}{\mathrm{m}}\Big)\Big)^\varepsilon\Big),
\end{align*}
 and
 \begin{align*} 
 P(\theta_2 \mid x) \propto (q^{\mathrm{m}-y} + q^{\mathrm{m}-(\mathrm{m} - y)}) \exp\Big(-(\beta x)^{\gamma}\Big(1 - \tanh(d) \cos\Big(\pi-\frac{2 \pi (\mathrm{m} - y)}{\mathrm{m}}\Big)\Big)^\varepsilon\Big).
\end{align*}

Since the cosine is an even function, $\cos(-\pi + \frac{2\pi y}{\mathrm{m}}) = \cos(\pi - \frac{2\pi y}{\mathrm{m}})$. Consequently, $P(\theta_1 \mid x) = P(\theta_2 \mid x)$, which proves the conditional distribution is perfectly symmetric about $\alpha$.

Next, we establish the existence of two modes. Retaining $\mu = \pi$ and further fixing $\varepsilon = 1$, we define a strictly positive constant $C = (\beta x)^\gamma \tanh(d)$. Using the trigonometric identity $\cos(\pi - \theta) = -\cos(\theta)$ and dropping the factor $\exp(-(\beta x)^\gamma)$ which is independent of $k$, the proportional probability at any corresponding $ \theta_k = 2 \pi k/\mathrm{m}$ simplifies to
\begin{equation*}
P(\theta_k \mid x) \propto (q^{\zeta_k} + q^{\mathrm{m}-\zeta_k}) \exp\Big(-C \cos\Big(\frac{2\pi\zeta_k}{\mathrm{m}}\Big)\Big),
\end{equation*}
where $\zeta_k = (k-\alpha) \bmod \mathrm{m}$.

To establish that the conditional distribution has a local maximum at the angle $\theta_\alpha = \frac{2 \pi \alpha}{\mathrm{m}}$, the conditional pmf at $\theta_\alpha$ must be strictly greater than the pmf at its immediate neighbors. These neighbors correspond to the angles $\theta_k = \frac{2 \pi k}{\mathrm{m}}$, where $k = (\alpha \pm 1) \bmod \mathrm{m}$. Due to the symmetry, it is sufficient to show that the pmf at $\theta_\alpha$ is strictly greater than the pmf at $\theta_{k_1} = \frac{2 \pi k_1}{\mathrm{m}}$, where $k_1 = (\alpha + 1) \bmod \mathrm{m}$. At $k = \alpha$, we have $\zeta_k = 0$, and at $k_1$, we have $\zeta_{k_1} = 1$. Then, the condition for the first mode is $P(\theta_\alpha \mid x) > P(\theta_{k_1} \mid x)$, that is
\begin{align*}
(1 + q^{\mathrm{m}}) e^{-C \cos(0)} &> (q + q^{\mathrm{m}-1}) e^{-C \cos(2\pi/\mathrm{m})}.
\end{align*}

By rearranging the terms and taking the logarithm, it yields

\begin{equation}
C \left(1 - \cos\left(\frac{2\pi}{\mathrm{m}}\right)\right) < \ln\left( \frac{1 + q^{\mathrm{m}}}{q + q^{\mathrm{m}-1}} \right)
\end{equation}

Similarly for the second mode at the angle $\theta _l = 2 \pi l / \mathrm{m}$, where $l = (\alpha + \mathrm{m}/2) \bmod \mathrm{m}$. Its neighbors are  $2 \pi (l \pm 1) / \mathrm{m}$, where $ l \pm 1 = (\alpha + \mathrm{m}/2 \pm 1) \bmod \mathrm{m}$. By symmetry, it is sufficient to evaluate the neighbor $l_1 = (\alpha + \mathrm{m}/2 - 1) \bmod \mathrm{m}$. At $l$, we have $\zeta_l = \mathrm{m}/2$, and at $l_1$, we have $\zeta_{l_1} = \mathrm{m}/2 - 1$. Then, the condition on the second mode is $P(\theta_l \mid x) > P(\theta_{l_1} \mid x)$, that is

\begin{equation*}
2q^{\mathrm{m}/2} e^C > (q^{\mathrm{m}/2-1} + q^{\mathrm{m}/2+1}) e^{C \cos(2\pi/\mathrm{m})}
\end{equation*}

By rearranging the terms and taking the logarithm, it yields

\begin{equation}
C \left(1 - \cos\left(\frac{2\pi}{\mathrm{m}}\right)\right) > \ln\left( \frac{1 + q^2}{2q} \right).
\end{equation}

From both conditions, 

\begin{equation*}
\ln\left( \frac{1 + q^{\mathrm{m}}}{q + q^{\mathrm{m}-1}} \right)  > \ln\left( \frac{1 + q^2}{2q} \right). 
\end{equation*} 
 By removing the logarithm and rearranging the term it implies

 \begin{align*}
2q(1 + q^{\mathrm{m}}) &> (1 + q^2)(q + q^{\mathrm{m}-1}) \\
 (q - q^{\mathrm{m}-1})(1 - q^2) &> 0
\end{align*}

Because $q \in (0, 1)$ and $\mathrm{m} \ge 4$, it is guaranteed that $(1 - q^2) > 0$ and $(q - q^{\mathrm{m}-1}) > 0$. The product of two strictly positive terms is positive, confirming the inequality holds. This proves that an interval for $C$ (and thus $x$) always exists where $P(\theta \mid x)$ is strictly bimodal. 
\end{proof}


\section{Proof of \autoref{theo:4.1}}
\begin{proof}
Let $\mathbf{h} = 1 - \tanh(d) \cos\!\left(\mu - \tfrac{2\pi\zeta}{\mathrm{m}}\right)$ and 
$\mathbf{g}(\gamma) = \Gamma(1 + 2\gamma) - \Gamma(1 + \gamma)^2$.  
 
If $\mu - \tfrac{2\pi\zeta}{\mathrm{m}}$ lies in the first or fourth quadrant, then $0 < \mathbf{h} \leq 1$, which implies that $\mathbf{h}^{1/\gamma}$ is an increasing function of $\gamma$. Conversely, if $\mu - \tfrac{2\pi\zeta}{\mathrm{m}}$ lies in the second or third quadrant, then $1 \leq \mathbf{h} < 2$, implying that $\mathbf{h}^{1/\gamma}$ is a decreasing function of $\gamma$.   

Let $\psi(\cdot)$ denote the digamma function, $\psi(t)=\Gamma'(t)/\Gamma(t)$.  
Differentiating $\mathbf{g}(\gamma)$ and using $\psi$ yields
\begin{align*}
\mathbf{g}'(\gamma)
&= 2\Gamma(1+2\gamma)\,\psi(1+2\gamma)
   - 2\Gamma(1+\gamma)^2\,\psi(1+\gamma) \\
&= 2\Gamma(1+\gamma)^2
   \left(
     \frac{\Gamma(1+2\gamma)}{\Gamma(1+\gamma)^2}\,\psi(1+2\gamma)
     - \psi(1+\gamma)
   \right).
\end{align*}

By log–convexity of the Gamma function we have $\Gamma(1+\gamma)^2 \le \Gamma(1+2\gamma)$, and since the digamma function $\psi$ is increasing (see Chapter~2 of \cite{artin2015}), the term in parentheses is strictly positive. Hence $\mathbf{g}'(\gamma)>0$, which implies that $\mathbf{g}(1/\gamma)$ is a decreasing function of $\gamma$.

We now combine these monotonicity properties.

If $\mu - \tfrac{2\pi\zeta}{m}$ lies in the first or fourth quadrant and $\varepsilon = 1$, then in both $\operatorname{E}(X \mid \Theta=\theta)$ and $\operatorname{Var}(X \mid \Theta=\theta)$ the numerators decrease and the denominators increase with $\gamma$. Hence, both conditional moments decrease in $\gamma$.

If $\mu - \tfrac{2\pi\zeta}{m}$ lies in the second or third quadrant and $\varepsilon = -1$, then in each expression the relevant components are decreasing functions of $\gamma$, implying that $\operatorname{E}(X \mid \Theta=\theta)$ and $\operatorname{Var}(X \mid \Theta=\theta)$ are decreasing in $\gamma$.
\end{proof}


\section{Proof of \autoref{theo.5}}
\begin{proof}
Assume $d=0$. In this case, the first-order moments are

\begin{align*}
\operatorname{E}(X\cos \Theta) &= \frac{(1-q)^2}{\beta A}\,\Gamma\!\Big(1+\tfrac{1}{\gamma}\Big)\,\cos \tfrac{2\pi\alpha}{\mathrm{m}},\\
\operatorname{E}(X\sin \Theta) &= \frac{(1-q)^2}{\beta A}\,\Gamma\!\Big(1+\tfrac{1}{\gamma}\Big)\,\sin\tfrac{2\pi\alpha}{\mathrm{m}},\\
\operatorname{E}(\cos \Theta) &= \frac{(1-q)^2}{A}\,\cos \tfrac{2\pi\alpha}{\mathrm{m}},\qquad
\operatorname{E}(\sin \Theta) = \frac{(1-q)^2}{A}\,\sin \tfrac{2\pi\alpha}{\mathrm{m}},\\
\operatorname{E}(X) &= \frac{1}{\beta^{2}}\,\Gamma\!\Big(1+\tfrac{2}{\gamma}\Big),
\end{align*}

where $A = 1-2q\cos\tfrac{2\pi}{\mathrm{m}}+q^2$.
These identities imply that
\[
\operatorname{cov}(X,\cos\Theta)=\operatorname{cov}(X,\sin\Theta)=0.
\]
In the nondegenerate case (i.e., $\operatorname{var}(X)$, $\operatorname{var}(\cos\Theta)$, and $\operatorname{var}(\sin\Theta)$ are nonzero), it follows that
\[
r_{xc}=\operatorname{corr}(X,\cos\Theta)=0
\quad\text{and}\quad
r_{xs}=\operatorname{corr}(X,\sin\Theta)=0.
\]

Since $|r_{cs}| < 1$ in the nondegenerate setting (i.e., $\cos\Theta$ and $\sin\Theta$ are not perfectly correlated), substituting $r_{xc} = r_{xs} = 0$ into the angular--radial correlation formula \eqref{eq:4.1} yields  
$
\rho^2_{x\theta} = 0.
$
\end{proof}

\end{appendices}


\end{document}